\documentclass[a4paper]{article}

\usepackage[T1]{fontenc}
\usepackage{lmodern}
\usepackage{geometry}
\usepackage{amsmath,amsthm,amssymb}
\usepackage{graphicx}
\usepackage{microtype}  

\usepackage{titlesec}
\titleformat*{\section}{\large\bfseries}
\titleformat*{\subsection}{\bfseries}

\makeatletter
  \def\@fnsymbol#1{\ensuremath{\ifcase#1\or *\or ** \or \ddagger\or
  \mathsection\or \mathparagraph\or \|\or **\or \ddagger\ddagger \else\@ctrerr\fi}}
\makeatother

\newtheorem{lemma}{Lemma}[section]
\newtheorem{claim}{Claim}[section]
\newtheorem{corollary}{Corollary}[section]

\title{Reachability and Shortest Paths in the Broadcast \\CONGEST Model{\thanks{\,A preliminary version of this paper appears in the \textit{Proceedings of the 33rd International Symposium on Distributed Computing (DISC'19)}, pages 11:1--11:13, 2019. The current version contains some improvements. In particular, it improves the approximate APSP algorithm.}}}
\author{Shiri Chechik\thanks{\,Tel Aviv University, Israel; emails: {\tt shiri.chechik@gmail.com} and {\tt doron.muk@gmail.com}.} \and Doron Mukhtar{$^{**}$}}
\date{}

\usepackage{xcolor}
\usepackage{hyperref} \hypersetup{
    colorlinks = true, allcolors = black, citecolor = {blue!60!black}, linkcolor = {blue!60!black}
}


\begin{document}

\maketitle\thispagestyle{empty}
 
\begin{abstract} In this paper we study the time complexity of the single-source reachability problem and the single-source shortest path problem for directed unweighted graphs in the Broadcast CONGEST model. We focus on the case where the diameter $D$ of the underlying network is constant. 

We show that for the case where $D = 1$ there is, quite surprisingly, a very simple algorithm that solves the reachability problem in $1$(!) round. In contrast, for networks with $D = 2$, we show that any distributed algorithm (possibly randomized) for this problem requires $\Omega(\sqrt{n/ \log{n}}\,)$ rounds. Our results therefore completely resolve (up to a small polylogarithmic factor) the complexity of the single-source reachability problem for a wide range of diameters.

Furthermore, we show that when $D = 1$, it is even possible to get a $3$\,--\,approximation for the all-pairs shortest path problem (for directed unweighted graphs) in just $2$ rounds. We also prove a stronger lower bound of $\Omega(\sqrt{n}\,)$ for the single-source shortest path problem for unweighted directed graphs that holds even when the diameter of the underlying network is $2$. As far as we know this is the first lower bound that achieves $\Omega(\sqrt{n}\,)$ for this problem.\end{abstract}

\newpage \setcounter{page}{1}

\section{Introduction}

Reachability and shortest path are two of the most fundamental problems in graph algorithms. In this paper, we study the single-source reachability (SSR) problem and the single-source shortest path (SSSP) problem in the Broadcast CONGEST model of distributed computing.

The CONGEST model \cite{Peleg00b} is one of the most studied message-passing models in the field of distributed computing. In this model, a synchronized $n$-vertex communication network is modeled by an undirected graph $N$ whose vertices correspond to the processors in this network and whose edges correspond to the communication links between them. Each vertex has a unique $O(\log{n})$-bit identifier initially known only to itself and its neighbors in $N$. The vertices communicate in discrete rounds, where in each round each vertex receives the messages that were previously sent to it, performs some unbounded local computation and then sends messages of $O(\log{n})$ bits to all or some of its neighbors. The vertices work together on some common task (such as computing distances in the network) and the complexity is measured by the number of communication rounds needed to complete this task. The Broadcast CONGEST model is a more restrictive variant of the CONGEST model where every vertex has to send (broadcast) the same message to all of its neighbors in each round.

In this paper we focus on directed and unweighted graphs. In the SSR problem, we are asked to identify all the vertices in a given graph $G$ for which there is a directed path from some designated vertex $s$ called the source. In the SSSP problem, we are further asked to compute for each such vertex its distance (the number of edges in a shortest path) from the source $s$. In the CONGEST model as well as in other similar message-passing models, we assume that the communication network $N$ is identical to the underlying graph of $G$ (where $G$ is the input graph for the SSR\textbackslash SSSP problem). We also assume that the communication between the vertices is bi-directional (regardless of the directions of the edges in $G$). Initially, each vertex in the network knows whether it is the source or not, and it also knows its set of incoming and outgoing edges in $G$. In the distributed SSR problem, each vertex has to determine whether it is reachable from the source or not, and in the distributed SSSP problem, each vertex has to determine its distance from the source.

\subparagraph{Related Work} 

Distance computation problems (such as the SSSP problem) have been widely studied in many models of distributed computing. It is not hard to see that in many synchronous message-passing models, problems such as SSR and SSSP require $\Omega(D)$ rounds (where $D$ is the diameter of the underlying network). While this lower bound can be easily matched when messages of unbounded size are allowed, the situation for models that require the messages to be of bounded size is far more involved.

In the CONGEST model, the directed SSR problem has been studied in several papers \cite{Nan14,GR15,Arun19}. The latest \cite{Arun19} shows that it is possible to solve this problem in $\tilde{O}(n^{1/2} + n^{1/3+o(1)}D^{2/3})$ rounds with high probability. Many variants of the SSSP problem (directed\textbackslash undirected, exact\textbackslash approximate etc.) were studied over the years (see, e.g., \cite{Nan14, Hen16, Beck17, Elkin17, GhaLi18, SFDN18}). In particular, for directed and weighted graphs, there is a randomized algorithm that solves the SSSP problem in $\tilde{O}(\sqrt{nD}\,)$ rounds \cite{SFDN18}. We note that many of the above mentioned algorithms (such as \cite{GR15, SFDN18}) actually work in the more restrictive Broadcast CONGEST model. Regarding lower bounds, Das Sarma et al. \cite{DasSa11} showed that in the CONGEST model the time complexity of any (possibly randomized) algorithm for the directed single-source reachability problem is $\Omega(\sqrt{n/ \log{n}}\,)$. However, this lower bound was shown only for graphs of underlying diameter $\Omega(n^\delta)$ for some $0 < \delta < 1/2$. For smaller diameters, similar but weaker lower bounds were shown e.g. $\Omega(\sqrt{n}/ \log{n})$ for graphs of underlying diameter $\Theta(\log n)$. The smallest constant diameter for which a non-trivial lower bound is known is 3 where it was shown to require $\Omega((n/\log{n})^{1/4})$ rounds \cite{DasSa11}.

For the related all-pairs shortest path (APSP) problem, many algorithms with near-optimal complexities for the approximate version of this problem and for the case of unweighted graphs were developed over the years (e.g., \cite{Holz12, Len15, Len13, Nan14}). Recently, many algorithms with improved complexities for the case of weighted graphs were devised \cite{Elkin17, CDT17, AgaU18} culminating with the $\tilde{O}(n)$-time randomized algorithm of \cite{AD19}.

\subparagraph{Our Results}

Many typical real-world networks usually have a relatively small diameter (as argued in e.g. \cite{Nan14}) and, in many cases, a simple topology as well. It is thus of particular interest to understand the complexity of many optimization problems in such settings. Motivated by this, we study in this paper the time complexity of the SSR problem and the SSSP problem (for directed unweighted graphs) in the Broadcast CONGEST model for networks of constant diameter. Specifically, we show that even for networks of diameter $2$ with a very simple topology, any distributed algorithm (possibly randomized) for the SSR problem requires $\Omega(\sqrt{n/ \log{n}}\,)$ rounds. In contrast, we show that quite surprisingly for networks of diameter $1$, this problem (or even the more general all-pairs reachability problem) can be solved deterministically in $1$ round. Moreover, we show that for networks of diameter $1$ one can compute in $2$ rounds a $3$-approximation for the SSSP\textbackslash APSP problem.

The algorithm for the approximate APSP problem (resp. for the all-pairs reachability problem) allows each vertex to compute a $3$-approximation for the distance between every pair of vertices in the graph (resp. determine reachability for every such pair) and not only for the pairs to which it belongs. We note that if one can compute a $(2-\epsilon)$-approximation for the APSP problem (for some $1 \ge \epsilon > 0$) such that there is some vertex $v$ that knows the computed estimation for every pair of vertices, then this vertex can recover the whole graph. This means that $v$ must receive in this case $\Omega(n^2)$ bits of information from its neighbors (simply because there are $2^{\,\Omega(n^2)}$ possible graphs on these vertices), but in each round, $v$ can get at most $O(n\log{n})$ bits from its neighbors and so $\Omega(n/\log{n})$ rounds are required for solving this problem.

Our results show a large gap between networks of diameter $1$ and $2$. As upper bounds of $\tilde{O}(\sqrt{n}\,)$ are already known for the SSR problem when the underlying network has a constant or polylogarithmic diameter (e.g., \cite{GR15,SFDN18}), we completely resolve (up to polylogarithmic factors) the complexity of the SSR problem when the diameter of the underlying network is in that range. Our algorithms are very simple (we see this as a plus and not a minus). In addition, we show a stronger lower bound of $\Omega(\sqrt{n}\,)$ for the SSSP problem for unweighted directed graphs in the Broadcast CONGEST model that holds even when the diameter of the underlying network is 2. As far as we know this is the first lower bound that achieves $\Omega(\sqrt{n}\,)$ for this problem. 

\subparagraph{Further Related Work}

A closely related model to the CONGEST when the underlying communication network has diameter 1 is the Congested Clique model. The Congested Clique model is a synchronous message-passing model in which the underlying communication network is the complete graph on $n$ vertices but the graph $G$ on which the solution needs to be obtained can be an arbitrary graph on $n$ vertices (that is, each vertex initially knows its neighbors in $G$ and can exchange messages of size $O(\log{n})$ with any vertex in the graph even if they are not adjacent in $G$).

Censor-Hillel et al. \cite{CensorH15} adapted parallel matrix multiplication algorithms to this model. Using these algorithms, they obtained better algorithms for subgraph detection and distance computation. In particular, they showed a $\tilde{O}(n^{1/3})$-round algorithm for solving the APSP problem for weighted directed graphs, and even more efficient algorithms for unweighted undirected graphs or distance approximation. Recently, it was shown \cite{CensorH19} that the SSSP problem for weighted undirected graphs can be solved in $\tilde{O}(n^{1/6})$ rounds.

We note that for problems such as SSSP or APSP (for weighted graphs) the Congested Clique model is actually a special case of the CONGEST model when the diameter of the underlying network is 1. To see this, note that one can always transform the input graph $G$ into a complete graph by adding edges of very large weight. Therefore, either one can show a constant upper bound for the weighted SSSP problem in the Congested Clique model which will be quite a breakthrough or our upper bound shows a separation between the SSR problem and the SSSP problem for directed weighted graphs of underlying diameter 1 (and even between the all-pairs reachability problem and the SSSP problem).

\section{Preliminaries}

In the following, we assume that all directed graphs are simple (i.e., they do not contain self-loops or multiple edges, but they may contain anti-parallel edges). For a graph $H$, we respectively denote by $V(H)$ and $E(H)$ its vertex set and edge set. The out-degree and in-degree of a vertex $v$ in a directed graph $H$ are denoted by $d_{out}(v)$ and $d_{in}(v)$, respectively. For a directed graph $H$ and a vertex $v$ in $H$, we denote by $N_{out}(v)$ its set of outgoing neighbors, and by $N_{in}(v)$ its set of ingoing neighbors. Given a directed graph $H = (V,E)$ and a set $A\subseteq V$, we denote by $A^c$ the set $V \setminus A$. The underlying diameter of a directed graph $H$ is defined to be the diameter of its underlying graph. For a graph $H$ and two vertices $v$ and $u$ in $V(H)$, we denote by $d(v,u,H)$ the distance from $v$ to $u$ in $H$. All logarithms in this paper are of base $2$.

The rest of the paper is organized as follows. In section \ref{sec:reachability-1}, we show an algorithm that solves the all-pairs reachability problem in one round for networks of diameter 1. In section \ref{sec:apsp}, we show that in two rounds one can compute an approximation for the APSP problem (also for networks of diameter 1). In section \ref{sec:lower_bounds}, we prove lower bounds for computing reachability and distances in networks of diameter 2.

\section{All-Pairs Reachability for Networks of Diameter 1} \label{sec:reachability-1}

In this section we show that when the diameter of the underlying network is 1, the directed single-source reachability problem can be solved in $O(1)$ rounds in the Broadcast CONGEST model. In fact, we show that it can be solved in a single round.
Furthermore, our algorithm can solve the much more general problem of all-pairs reachability (again in a single round).
The algorithm is extremely simple. Every vertex simply sends its in-degree and out-degree to all of its neighbors in the underlying network, and then, by using this information only, each vertex can determine (by a simple computation) which vertex is reachable from which. This requires messages of at most $2\lceil\log_2{n}\rceil$ bits where $n$ is the number of vertices in the network (moreover, if there are no anti-parallel edges then, as the underlying diameter is 1, the in-degree plus out-degree of every vertex is exactly $n-1$ and therefore it is enough to send only the in-degree and so $\lceil\log_2{n}\rceil$ bits are enough).

The next lemma shows that when the underlying diameter of some directed graph $H$ is 1, we can determine if $E(H) \cap (A \times A^c) = \varnothing$ by using the in and out degrees of the vertices in $A$, for every subset of vertices $A \subseteq V(H)$.

\begin{lemma} For every directed graph $H = (V,E)$ with underlying diameter $1$ and every set $A \subseteq V$, we have $\sum_{v \in A}(d_{in}(v) - d_{out}(v)) = |A^c \times A|$ if and only if $E\cap (A \times A^c) = \varnothing$.\label{lem:closed_set}\end{lemma}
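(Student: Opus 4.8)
The plan is to reduce both sides of the claimed identity to a count of directed edges crossing the cut $(A, A^c)$, and then observe that the diameter-$1$ hypothesis is exactly what is needed to turn a generic inequality into the stated equivalence. First I would classify every directed edge by the location of its two endpoints relative to $A$, writing $e(X,Y)$ for the number of directed edges going from a set $X$ into a set $Y$. Summing in-degrees over $A$ counts precisely the edges whose head lies in $A$, so $\sum_{v \in A} d_{in}(v) = e(A,A) + e(A^c, A)$; similarly $\sum_{v \in A} d_{out}(v) = e(A,A) + e(A, A^c)$. Subtracting, the symmetric term $e(A,A)$ cancels and we obtain the clean identity
\begin{equation*}
\sum_{v \in A}\bigl(d_{in}(v) - d_{out}(v)\bigr) = e(A^c, A) - e(A, A^c).
\end{equation*}

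With this reformulation, the lemma becomes the statement that $e(A^c, A) - e(A, A^c) = |A^c|\cdot|A|$ if and only if $e(A, A^c) = 0$ (the latter being exactly the condition $E \cap (A \times A^c) = \varnothing$). The next step is to record two elementary bounds that hold for any directed graph: $e(A^c, A) \le |A^c| \cdot |A|$, since distinct edges from $A^c$ to $A$ occupy distinct ordered pairs in $A^c \times A$, and $e(A, A^c) \ge 0$. Together these give $e(A^c, A) - e(A, A^c) \le |A^c|\cdot|A|$, with equality forcing \emph{both} $e(A^c, A) = |A^c|\cdot|A|$ and $e(A, A^c) = 0$ simultaneously.

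For the two directions I would proceed as follows. The reverse implication (equality $\Rightarrow E \cap (A \times A^c) = \varnothing$) is immediate and uses no hypothesis on the diameter: if the difference attains its maximum $|A^c|\cdot|A|$, then by the bounds above we must have $e(A, A^c) = 0$. The forward implication is where the underlying diameter $1$ enters. Assuming $e(A, A^c) = 0$, I would argue that every ordered pair $(u,v) \in A^c \times A$ is in fact an edge: since the underlying graph is complete, the pair $\{u, v\}$ carries at least one directed edge, and that edge cannot be $(v,u)$ because $(v,u) \in A \times A^c$ and $e(A, A^c) = 0$; hence it is $(u,v)$. This yields $e(A^c, A) = |A^c|\cdot|A|$, and combined with $e(A, A^c) = 0$ the difference equals $|A^c|\cdot|A|$ as required.

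I expect the only genuine subtlety to be isolating the role of the diameter-$1$ assumption: it is used in exactly one place, to upgrade the generic inequality $e(A^c,A) \le |A^c|\cdot|A|$ to the equality $e(A^c,A) = |A^c|\cdot|A|$ once the opposite-direction edges are known to be absent. Everything else is bookkeeping on the cut. I would therefore present the edge-decomposition identity first, then the bounds, and finish with the short completeness argument, taking care that the cancellation of $e(A,A)$ is stated explicitly so the reader sees why only the cut edges survive.
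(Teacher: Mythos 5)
Your proof is correct and takes essentially the same approach as the paper: the identical degree-sum decomposition yielding $\sum_{v\in A}\bigl(d_{in}(v)-d_{out}(v)\bigr) = |E\cap(A^c\times A)| - |E\cap(A\times A^c)|$, the same counting bound $|E\cap(A^c\times A)| \le |A^c\times A|$ to force $E\cap(A\times A^c)=\varnothing$ in one direction, and the same diameter-$1$ argument (every pair in $A^c\times A$ must carry an edge, necessarily oriented into $A$) for the converse. Your observation that equality simultaneously forces $e(A^c,A)=|A^c|\cdot|A|$ and $e(A,A^c)=0$ is just a slightly more explicit packaging of the paper's argument.
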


\begin{proof} Let $H = (V,E)$ be a directed graph with underlying diameter $1$ and let $A$ be some subset of $V$. We have $\sum_{v \in A}{d_{out}(v)} = |E \cap (A \times V)| = |E \cap (A \times A)| + |E \cap (A \times A^c)|$ and similarly $\sum_{v \in A}{d_{in}(v)} = |E \cap (V \times A)| = |E \cap (A \times A)| + |E \cap (A^c \times A)|$. It follows that \begin{equation}\quad\quad\\\\ \textstyle{\sum}_{v\in A}{\left(d_{in}(v) - d_{out}(v) \right)} = |E\cap (A^c \times A)| - |E \cap (A \times A^c)|\label{eq1}\end{equation} Now, for showing the first direction, assume that $\sum_{v \in A}(d_{in}(v) - d_{out}(v)) = |A^c \times A|$. By equation (\ref{eq1}), we have $|A^c \times A| = |E\cap (A^c \times A)| - |E \cap (A \times A^c)|$. As $|E \cap (A^c \times A)| \le |A^c \times A|$, we get that $|E \cap (A \times A^c)| \le 0$ and so $E \cap (A \times A^c) = \varnothing$.

For the second direction, assume that $E \cap (A \times A^c) = \varnothing$. Since in addition $H$ has underlying diameter $1$, every vertex in $A^c$ must have an outgoing edge to every vertex in $A$ and so $E \cap (A^c \times A) = A^c \times A$. It follows, by equation (\ref{eq1}), that $\sum_{v\in A}{\left(d_{in}(v) - d_{out}(v) \right)} = |A^c \times A|$.\end{proof}

The next lemma shows that when the underlying diameter of some directed graph $H$ is 1, the in and out degrees of all the vertices in $H$ are enough to determine which vertices are reachable from any given vertex in $H$.

\begin{lemma} For every directed graph $H$ with underlying diameter $1$, every ordering $(v_1,...,v_n)$ of its vertices  such that $d_{out}(v_1) \le ... \le d_{out}(v_n)$ and every $i \in \{1,...,n\}$, there exists an index $k \in \{i,...,n\}$ such that the set of reachable vertices from $v_i$ in $H$ is equal to $\{v_1,...,v_k\}$. Moreover, $k$ is the minimal index in $\{i,...,n\}$ for which $(n-k)k = \sum_{j=1}^k(d_{in}(v_j) - d_{out}(v_j))$.\label{lemma:reachability}\end{lemma}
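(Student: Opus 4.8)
The plan is to reformulate everything in terms of \emph{out-closed} sets, i.e.\ sets $A$ with $E \cap (A \times A^c) = \varnothing$, since the set $R$ of vertices reachable from $v_i$ (including $v_i$ itself) is precisely the smallest out-closed set containing $v_i$: it is out-closed because any edge leaving a reachable vertex reaches another reachable vertex, and it is smallest because any out-closed set containing $v_i$ must contain every vertex reachable from $v_i$. Lemma~\ref{lem:closed_set} then provides the arithmetic certificate $\sum_{j\in A}(d_{in}(v_j)-d_{out}(v_j)) = |A^c\times A|$ for out-closedness, which is exactly the equation appearing in the statement (with $|A^c\times A| = (n-|A|)|A|$). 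So the two assertions to prove are: (a) $R$ is a prefix $\{v_1,\dots,v_k\}$ of the out-degree ordering with $k=|R|\ge i$, and (b) $k$ is the minimal index in $\{i,\dots,n\}$ satisfying the equation.

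For (a), the key structural observation—and what I expect to be the heart of the argument—is a degree gap forced by the diameter-$1$ assumption. First I would note that every vertex $w\in R$ has all of its out-edges landing inside $R$ (out-closedness), so $d_{out}(w)\le |R|-1$. Conversely, for $u\in R^c$ and any $w\in R$, the edge between them in the underlying complete graph cannot be $w\to u$ (that would leave $R$), hence it must be $u\to w$; thus each $u\in R^c$ has an out-edge to every one of the $|R|$ vertices of $R$, giving $d_{out}(u)\ge |R|$. Writing $k=|R|$, this yields the strict separation $d_{out}(w)\le k-1 < k \le d_{out}(u)$ for all $w\in R$ and $u\in R^c$, with no ties crossing the boundary. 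Consequently, in \emph{any} ordering with $d_{out}(v_1)\le\dots\le d_{out}(v_n)$, the $k$ smallest-out-degree vertices are exactly the members of $R$, so $R=\{v_1,\dots,v_k\}$; and since $v_i\in R$ we get $i\le k$, as required.

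For (b), applying Lemma~\ref{lem:closed_set} to the out-closed set $R=\{v_1,\dots,v_k\}$ immediately shows $k$ satisfies $(n-k)k=\sum_{j=1}^k(d_{in}(v_j)-d_{out}(v_j))$. To see minimality, suppose some $k''\in\{i,\dots,n\}$ with $k''<k$ satisfied the same equation. Then by the reverse direction of Lemma~\ref{lem:closed_set} the prefix $A=\{v_1,\dots,v_{k''}\}$ would be out-closed, and since $k''\ge i$ it would contain $v_i$; but $R$ is the smallest out-closed set containing $v_i$, forcing $R\subseteq A$ and hence $k\le k''$, a contradiction. Thus no such $k''$ exists and $k$ is minimal, completing the proof.

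The only delicate point is the degree-gap step in (a): I must use the diameter-$1$ hypothesis twice—once to bound out-degrees inside $R$ (via out-closedness) and once to force every external vertex to dominate all of $R$—and then argue that the resulting strict inequality guarantees $R$ occupies the first $k$ positions regardless of how ties among equal out-degrees are broken. Everything after that is a direct translation through Lemma~\ref{lem:closed_set} together with the minimal-out-closed-set characterization of reachability.
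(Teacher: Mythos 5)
Your proof is correct and takes essentially the same approach as the paper's: both show the reachable set is a prefix of the ordering via an out-degree separation between the reachable set and its complement (out-closedness bounds the inside degrees, diameter $1$ forces every outside vertex to have edges into the whole set), and both obtain the equation and its minimality from Lemma~\ref{lem:closed_set}, since a smaller prefix satisfying the equation would be out-closed and block reachability. Your absolute gap $d_{out}(w)\le k-1<k\le d_{out}(u)$ is just a slightly cleaner phrasing of the paper's threshold $d_{out}(v_k)+1$.
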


\begin{proof} Let $H = (V,E)$ be a directed graph with underlying diameter $1$, let $(v_1,...,v_n)$ be an ordering of its vertices such that $d_{out}(v_1) \le ... \le d_{out}(v_n)$ and let $i \in \{1,...,n\}$. Let $A$ be the set of all the reachable vertices from $v_i$ in $H$, and note that we must have $E \cap (A \times A^c) = \varnothing$ (as otherwise $v_i$ can reach a vertex from $A^c$ which is of course contradiction to the definition of $A$ and $A^c$).

Let $k$ be the highest index in $\{i,...,n\}$ for which $v_k \in A$ (such an index must exist as $v_i \in A$). Clearly, we have $A \subseteq \{v_1,...,v_k\}$. We claim that we must also have $\{v_1,...,v_k\} \subseteq A$. Since $v_k \in A$ and $E \cap (A \times A^c) = \varnothing$, the set $A$ must contain at least $d_{out}(v_k) + 1$ vertices (the vertex $v_k$ and its $d_{out}(v_k)$ outgoing neighbors).
It also follows that every vertex in $A^c$ must have out-degree at least $d_{out}(v_k) + 1$.
To see this, note that every vertex in $A^c$ must have an outgoing edge to every vertex in $A$ (as $E \cap (A \times A^c) = \varnothing$ and the underlying diameter of $H$ is $1$). Therefore, it must be that $v_j \in A$ for all $j \in \{1,...,k\}$ as $d_{out}(v_j) \le d_{out}(v_k)$ for every such $j$. We conclude that $A = \{v_1,...,v_k\}$.

Now, as $E \cap (A \times A^c) = \varnothing$ we get from Lemma \ref{lem:closed_set} that $(n-k)k = \sum_{j=1}^k(d_{in}(v_j) - d_{out}(v_j))$. We are left to show that $k$ is the minimal index in $\{i,...,n\}$ with this property. Assume towards a contradiction that there exists $m \in \{i,...,n\}$ such that $m < k$ and $\sum_{j=1}^m(d_{in}(v_j) - d_{\text{out}}(v_j)) = (n-m)m$. Let $B = \{v_1,...,v_m\}$. By Lemma \ref{lem:closed_set} we get that $E \cap (B \times B^c) = \varnothing$, and, in particular, that $v_k$ is not reachable from $v_i$ (as $v_i \in B$ and $v_k \not\in B$) which is a contradiction.\end{proof}

Lemma \ref{lemma:reachability} can be easily turned into an algorithm that solves the all-pairs reachability problem in one round (when the diameter of the underlying network is 1) as follows. Each vertex $v$ in the graph starts by broadcasting the values of $d_{in}(v)$ and $d_{out}(v)$. After receiving the messages, $v$ sorts the vertices in non-decreasing order of their out-degree. Let $(v_1,...,v_n)$ be that ordering. It then finds for every $i \in \{1,...,n\}$ the minimal index $k_i \in \{i,...,n\}$ such that $(n-k_i)k_i = \sum_{j=1}^{k_i}(d_{in}(v_j) - d_{out}(v_j))$ and deduces by Lemma \ref{lemma:reachability} that the set of reachable vertices from $v_i$ is $\{v_1,...,v_{k_i}\}$. We conclude the following:

\begin{corollary} In the Broadcast CONGEST model, there is a deterministic algorithm that solves the all-pairs reachability problem in one round when the diameter of the underlying network is $1$. \end{corollary}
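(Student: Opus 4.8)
The plan is to package Lemma \ref{lemma:reachability} into a one-round Broadcast CONGEST protocol and then verify that this protocol respects the bandwidth and communication constraints of the model. Concretely, I would have every vertex $v$ broadcast, in the single round, the pair $(d_{in}(v), d_{out}(v))$. Since the underlying network has diameter $1$, its underlying graph is complete, so every vertex is a neighbor of every other vertex; hence after this one round each vertex has received $(d_{in}(u), d_{out}(u))$ for all $u \in V$ and thus knows the full degree sequence of $H$.

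Next I would check that the protocol is legal. Each degree lies in $\{0, \ldots, n-1\}$, so the pair fits in $2\lceil \log_2 n \rceil = O(\log n)$ bits, meeting the CONGEST bandwidth bound. Moreover the message a vertex sends does not depend on the recipient, so broadcasting the same message to all neighbors (as required by the Broadcast variant) is exactly what happens here. All remaining work --- sorting, evaluating the prefix sums, and locating the relevant indices --- is performed locally after the round, which is free since local computation is unbounded.

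For correctness I would invoke Lemma \ref{lemma:reachability} directly. Each vertex sorts the received vertices into an ordering $(v_1, \ldots, v_n)$ with $d_{out}(v_1) \le \cdots \le d_{out}(v_n)$, and for each source index $i$ computes the minimal $k_i \in \{i, \ldots, n\}$ satisfying $(n-k_i)k_i = \sum_{j=1}^{k_i}(d_{in}(v_j) - d_{out}(v_j))$; by the lemma the set of vertices reachable from $v_i$ is exactly $\{v_1, \ldots, v_{k_i}\}$. Because a vertex knows all degrees, it can carry out this computation for every source simultaneously, which is precisely what the all-pairs reachability problem demands.

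The point most worth double-checking --- and essentially the only subtlety, since the real content already sits in Lemma \ref{lemma:reachability} --- is that different vertices may break ties among equal out-degrees differently and hence fix different orderings $(v_1, \ldots, v_n)$. This causes no inconsistency: Lemma \ref{lemma:reachability} holds for \emph{every} ordering sorted by non-decreasing out-degree, and it identifies the reachable vertices as an actual set of vertex identifiers rather than a set of positions. Thus, regardless of how ties are resolved, each vertex recovers the same reachable set for each source, and all vertices agree on the answer for every pair, establishing the corollary.
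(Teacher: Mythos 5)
Your proposal is correct and matches the paper's own argument: broadcast $(d_{in}(v), d_{out}(v))$ in one round, then apply Lemma \ref{lemma:reachability} locally to the sorted degree sequence, exactly as the paper does following that lemma. Your extra observation about tie-breaking among equal out-degrees is a valid clarification (the lemma holds for \emph{every} non-decreasing ordering, so each vertex recovers the true reachable set regardless of how it orders ties) but does not change the approach.
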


We also note that the time complexity of the internal computation of each vertex is $O(n^2)$.

\section{APSP Approximation for Networks of Diameter 1} \label{sec:apsp}

In the previous section, we showed that it is possible to solve the all-pairs reachability problem in one round for networks of diameter 1. Here we show that it is actually possible to compute an approximation to the distance between all pairs of vertices in two rounds using messages of at most $\lceil \log_2{n} \rceil$ bits (where $n$ is the number of vertices in the network).

Let $G = (V,E)$ be a directed graph on $n$ vertices and underlying diameter $1$. For every non-negative integer $i < n$, we let $A(i)$ be the set of all the vertices $u \in V$ whose out-degree is greater than $i$ and that have some in-going neighbor whose out-degree is at most $i$, that is, $A(i) = \{u \in V \mid (d_{out}(u) > i) \text{ \,and\, } (\exists w \in V \text{ s.t. } (w,u) \in E \text{ \,and\, } d_{out}(w) \le i)\}$. We also set $M(i)$ to be $\perp$ if $A(i)= \varnothing$ and $\max\,\{d_{out}(v) \mid v \in A(i)\}$ otherwise.

Next, we define for each vertex $x \in V$ a sequence $f_0[x],...,f_n[x]$ of elements by first setting $f_0[x]$ to be $\max\,\{d_{out}(v) \mid v \in \{x\}\cup N_{out}(x)\}$, and then for each $k \in \{1,...,n\}$, we set $f_k[x]$ to be $\perp$ if $f_{k-1}[x] = \;\perp$ and to $M(f_{k-1}[x])$ otherwise. We first prove some basic properties.

\begin{claim} For every $x \in V$ and $i \in \{0,...,n-1\}$, if $f_{i+1}[x] \ne \;\perp$ then $f_{i}[x] \ne \;\perp$ and $f_{i}[x] < f_{i+1}[x]$.\label{claim:basic_p}\end{claim}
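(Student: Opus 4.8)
The plan is to simply unfold the two-case recursive definition of the sequence $f_0[x],\dots,f_n[x]$ and read off both conclusions from it. The first conclusion, that $f_{i+1}[x]\neq\;\perp$ forces $f_i[x]\neq\;\perp$, requires no real work: the defining clause states that $f_{i+1}[x]=\;\perp$ whenever $f_i[x]=\;\perp$, so its contrapositive gives exactly that $f_i[x]\neq\;\perp$ whenever $f_{i+1}[x]\neq\;\perp$. This handles the first half of the claim immediately, and it also puts us in the ``otherwise'' branch of the definition, so I may assume $f_i[x]\neq\;\perp$ for the rest.

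For the strict inequality $f_i[x]<f_{i+1}[x]$, I would first note that a non-$\perp$ value of $f$ is always a genuine out-degree value: $f_0[x]$ is a maximum of out-degrees over $\{x\}\cup N_{out}(x)$, and each subsequent non-$\perp$ term is of the form $M(\cdot)$, which is a maximum of out-degrees over some $A(\cdot)$. Since out-degrees in a simple $n$-vertex graph lie in $\{0,\dots,n-1\}$, the value $f_i[x]$ is a legal argument for $M$ and $A$. Because $f_i[x]\neq\;\perp$, the definition gives $f_{i+1}[x]=M(f_i[x])$; and because $f_{i+1}[x]\neq\;\perp$, we are in the non-$\perp$ case of $M$, so $A(f_i[x])\neq\varnothing$ and $f_{i+1}[x]=\max\{d_{out}(v)\mid v\in A(f_i[x])\}$.

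Finally I would invoke the very definition of $A(j)$: every vertex $v\in A(j)$ satisfies $d_{out}(v)>j$. Applying this with $j=f_i[x]$, every vertex achieving the maximum that defines $f_{i+1}[x]$ has out-degree strictly larger than $f_i[x]$, hence $f_{i+1}[x]>f_i[x]$, which is the desired strict inequality. There is no genuine obstacle in this argument; the only step warranting a moment's attention is the bookkeeping that $f_i[x]$ always lies in the admissible range $\{0,\dots,n-1\}$ so that $A(f_i[x])$ and $M(f_i[x])$ are well-defined, and this follows directly from the fact that $f$ is built entirely out of out-degree values.
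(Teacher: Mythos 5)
Your proposal is correct and follows essentially the same argument as the paper: unfold the recursive definition to get $f_i[x]\ne\;\perp$ and $A(f_i[x])\ne\varnothing$, then use the fact that every vertex in $A(f_i[x])$ has out-degree strictly greater than $f_i[x]$ to conclude $f_{i+1}[x]>f_i[x]$. Your extra remark that non-$\perp$ values of $f$ are always out-degrees (hence in $\{0,\dots,n-1\}$, so $A$ and $M$ are well-defined on them) is a harmless bookkeeping point the paper leaves implicit.
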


\begin{proof} Let $x \in V$ and $i \in \{0,...,n-1\}$ be such that $f_{i+1}[x] \ne \;\perp$. By definition, we must have $f_{i}[x] \ne \;\perp$ and $A(f_{i}[x]) \ne \varnothing$, that is, $f_{i+1}[x]$ must be equal to the maximum out-degree of the vertices in $A(f_{i}[x])$. As, by definition, $A(f_{i}[x])$ contains only vertices whose out-degree is greater than $f_{i}[x]$, it must be that $f_{i+1}[x] > f_{i}[x]$.\end{proof}

Note that the above claim implies that if $f_i[x] \ne \;\perp$ holds for some $x \in V$ and $i \in \{1,...,n\}$, then we must have $f_j[x] \ne \;\perp$ for every $j \in \{0,...,i\}$ and $f_0[x] < ... < f_i[x]$. In particular, we must have $f_n[x] = \;\perp$ for every $x \in V$ (as otherwise, we would get that $f_n[x] \ge n$ which is impossible as the maximum possible out-degree is $n-1$).

In the next two lemmas, we show how the defined sequences can be used to estimate the distances between the vertices in the graph.

\begin{lemma} For every two different vertices $x$ and $y$ in $V$, if $y$ is reachable from $x$ in $G$, then there exists an index $i \in \{0,...,n-1\}$ such that $f_i[x] \ne \;\perp$ and $f_i[x] \ge d_{out}(y)$. Moreover, if $i$ is the minimal index for which this property holds, then the distance from $x$ to $y$ in $G$ is at least $i+1$. \label{lemma:est1}\end{lemma}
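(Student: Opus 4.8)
The plan is to reduce the statement to the following single assertion: if $y\neq x$ is reachable from $x$ and $d=d(x,y,G)$, then there is an index $i\in\{0,\dots,d-1\}$ with $f_i[x]\neq\;\perp$ and $f_i[x]\ge d_{out}(y)$. This suffices for both halves of the lemma: the first sentence is immediate, and since any such witness forces the \emph{minimal} qualifying index $i_{\min}$ to satisfy $i_{\min}\le d-1$, we obtain $d(x,y,G)=d\ge i_{\min}+1$, which is the ``moreover'' part. Throughout I write $R_j=\{v\in V: d(x,v,G)\le j\}$ for the ball of radius $j$ around $x$ and $m_k=\max\{d_{out}(v):v\in R_{k+1}\}$, and I note that $y\in R_d$, so $d_{out}(y)\le m_{d-1}$.

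First I would establish a \emph{monotone lower bound}: by induction on $k$, whenever $f_k[x]\neq\;\perp$ one has $f_k[x]\ge m_k$. The base case $k=0$ is an equality, directly from the definitions of $f_0[x]$ and of $R_1$. For the step, assume $f_{k-1}[x]\neq\;\perp$ (which holds since $f_k[x]=M(f_{k-1}[x])$), so $f_{k-1}[x]\ge m_{k-1}$ by induction. If $f_{k-1}[x]\ge m_k$, then Claim~\ref{claim:basic_p} gives $f_k[x]>f_{k-1}[x]\ge m_k$. Otherwise $m_{k-1}\le f_{k-1}[x]<m_k$, so any vertex $v^\ast\in R_{k+1}$ attaining $d_{out}(v^\ast)=m_k$ lies at distance exactly $k+1$; its shortest-path predecessor $u\in R_k$ has $d_{out}(u)\le m_{k-1}\le f_{k-1}[x]$ while $d_{out}(v^\ast)=m_k>f_{k-1}[x]$, so $v^\ast\in A(f_{k-1}[x])$ and hence $f_k[x]=M(f_{k-1}[x])\ge m_k$.

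Second I would handle termination. Let $L$ be the largest index with $f_L[x]\neq\;\perp$; it exists because $f_0[x]\neq\;\perp$, and $L\le n-1$ since $f_n[x]=\;\perp$. Then $f_{L+1}[x]=\;\perp$ forces $A(f_L[x])=\varnothing$, and I claim this already implies $f_L[x]\ge d_{out}(v)$ for \emph{every} vertex $v$ reachable from $x$. Indeed, suppose some reachable vertex had out-degree exceeding $f_L[x]$; split the reachable vertices into $T=\{v: d_{out}(v)\le f_L[x]\}$ (which contains $x$, as $d_{out}(x)\le f_0[x]\le f_L[x]$) and its nonempty complement $S$ among reachable vertices. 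Taking a shortest path from $x\in T$ to any vertex of $S$ and letting $(a,b)$ be its first edge crossing from $T$ to $S$, we get $d_{out}(b)>f_L[x]$ together with an in-neighbour $a$ of $b$ with $d_{out}(a)\le f_L[x]$, so $b\in A(f_L[x])$, contradicting $A(f_L[x])=\varnothing$.

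Finally I would combine the two ingredients by a case split on $f_{d-1}[x]$. If $f_{d-1}[x]\neq\;\perp$, the monotone lower bound yields $f_{d-1}[x]\ge m_{d-1}\ge d_{out}(y)$, so $i=d-1$ is the desired witness. If instead $f_{d-1}[x]=\;\perp$, then the terminal index satisfies $L<d-1$, and the termination bound gives $f_L[x]\ge d_{out}(y)$ because $y$ is reachable; here $i=L\le d-1$ works. I expect the main obstacle to be exactly this second case: the sequence $f_\bullet[x]$ may reach $\;\perp$ strictly before index $d-1$, so one cannot simply read the bound off at $k=d-1$. The two realizations that resolve it are that the witness index is only required to be \emph{at most} $d-1$ (not equal to it), and that the emptiness of $A(f_L[x])$ — the very condition that terminates the sequence — is precisely what rules out a reachable vertex of out-degree above $f_L[x]$, via the crossing-edge argument.
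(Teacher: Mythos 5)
Your proof is correct, but it takes a genuinely different route from the paper's. The paper fixes a shortest path $\pi$ from $x$ to $y$ and works entirely on it: for existence it takes the largest defined index and derives a contradiction from a crossing edge $(u,v)$ of $\pi$ with $d_{out}(u)\le f_i[x] < d_{out}(v)$ (which would force the next term to be defined, impossible since $f_n[x]=\;\perp$); for the ``moreover'' part it uses the minimality of $j$ to extract, for each $k\in\{0,\dots,j-1\}$, a vertex $v_k$ on $\pi$ with $f_k[x]<d_{out}(v_k)\le f_{k+1}[x]$, and since these out-degrees are strictly increasing, $\pi$ contains $j$ distinct vertices beyond $x$ and its successor, hence at least $j+1$ edges. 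You instead prove a global invariant --- whenever $f_k[x]\ne\;\perp$ it dominates the maximum out-degree $m_k$ over the ball of radius $k+1$ around $x$ --- plus a termination lemma stating that when the sequence dies at index $L$, the emptiness of $A(f_L[x])$ bounds the out-degree of \emph{every} vertex reachable from $x$; your equivalent ``witness at some index $\le d-1$'' reformulation then delivers both halves of the lemma at once, sidestepping any direct manipulation of the minimal index. Both arguments ultimately rest on the same crossing-edge observation (an edge from out-degree $\le i$ to out-degree $>i$ puts its head in $A(i)$), and your termination step is essentially the paper's existence argument in a different guise. What your route buys is a cleaner semantic reading of the sequence: $f_k[x]$ upper-bounds the largest out-degree within distance $k+1$ of $x$, which is a tight counterpart to Lemma \ref{lemma:est2}'s construction of a vertex of out-degree exactly $f_k[x]$ within distance $1+3k$; the paper's path-counting is more local and avoids the ball induction, but must handle the minimality of $j$ explicitly. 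I verified the delicate points of your argument --- the case $m_{k-1}\le f_{k-1}[x]<m_k$ correctly places the maximizer $v^\ast$ at distance exactly $k+1$ with a predecessor in $R_k$, and in your final case split the monotonicity of definedness (the consequence of Claim \ref{claim:basic_p}) indeed gives $L<d-1$ --- and they all hold.
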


\begin{proof} Let $x$ and $y$ be two different vertices in $V$ such that $y$ is reachable from $x$, and let $\pi$ be some shortest path from $x$ to $y$ in $G$.

Let $i$ be the highest index in $\{0,...,n-1\}$ for which $f_i[x] \ne \; \perp$ (such an index must exist as $f_0[x] \ne \; \perp$ always holds). We claim that $f_i[x] \ge d_{out}(y)$. Indeed, assume towards a contradiction that this is not the case, that is, assume that $d_{out}(y) > f_i[x]$. As, in addition, we have $d_{out}(x) \le f_0[x] \le f_i[x]$, it must be that $\pi$ contains an edge $(u,v)$ such that $d_{out}(u) \le f_i[x]$ and $d_{out}(v) > f_i[x]$. It follows, by definition, that $v \in A(f_i[x])$ and so that $A(f_i[x]) \ne \varnothing$ which implies that $f_{i+1}[x] \ne \;\perp$. By the maximality of $i$, this is possible only if $i+1 = n$, that is, we must have $f_n[x] \ne \; \perp$ which is impossible. We conclude that there must be an index $i$ in $\{0,...,n-1\}$ for which $f_i[x] \ne \;\perp$ and $f_i[x] \ge d_{out}(y)$.

Now, let $j$ be the minimal index in $\{0,...,n-1\}$ with this property. We have to show that $\pi$ contains at least $j+1$ edges. As $x \ne y$, there must be a vertex $z \ne x$ which is adjacent to $x$ in $\pi$. By the definition of $f_0[x]$, we have $d_{out}(x) \le f_0[x]$ and $d_{out}(z) \le f_0[x]$. In other words, $\pi$ contains at least two different vertices whose out-degree is at most $f_0[x]$. We will show that $\pi$ must also contain at least $j$ different vertices whose out-degree is greater than $f_0[x]$. This will imply that $\pi$ contains at least $2 + j$ different vertices, and so at least $j+1$ edges.

First, note that $f_k[x] \ne \;\perp$ holds for every $k \in \{0,...,j\}$ (as $f_j[x] \ne \; \perp$) and so from the minimality of $j$ it must be that $f_k[x] < d_{out}(y)$ holds for every $k \in \{0,...,j-1\}$. We claim that for every such $k$ there must be a vertex $v_k$ in $\pi$ such that $f_k[x] < d_{out}(v_k) \le f_{k+1}[x]$. Indeed, let $k \in \{0,...,j-1\}$. As $d_{out}(x) \le f_0[x] \le f_k[x]$ and $f_k[x] < d_{out}(y)$, it must be that $\pi$ contains an edge $(u_k,v_k)$ such that $d_{out}(u_k) \le f_k[x]$ and $d_{out}(v_k) > f_k[x]$. By definition, we have $v_k \in A(f_k[x])$ and so $d_{out}(v_k) \le f_{k+1}[x]$, that is, $\pi$ contains a vertex $v_k$ such that $f_{k}[x] < d_{out}(v_k) \le f_{k+1}[x]$. It follows that $\pi$ must contain $j$ vertices $v_0,...,v_{j-1}$ such that $f_0[x] < d_{out}(v_0) < \dots < d_{out}(v_{j-1})$, that is, at least $j$ different vertices whose out-degree is greater than $f_0[x]$. \end{proof}

\begin{lemma} For every two vertices $x$ and $y$ in $V$ and every index $i \in \{0,...,n-1\}$, if $f_i[x] \ne \;\perp$ and $f_i[x] \ge d_{out}(y)$, then $G$ contains a path from $x$ to $y$ of length at most $3(i+1)$.\label{lemma:est2}\end{lemma}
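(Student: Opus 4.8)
The plan is to prove a slightly stronger and cleaner statement by induction on $i$, namely: for every $x$ and every index $i$ with $f_i[x] \ne \perp$, \emph{every} vertex $v$ with $d_{out}(v) \le f_i[x]$ is reachable from $x$ by a path of length at most $3(i+1)$. Since the hypothesis of the lemma is exactly $f_i[x] \ne \perp$ and $d_{out}(y) \le f_i[x]$, this immediately gives the claim for $y$.

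The engine of the whole argument is a single auxiliary fact that I would isolate first, and which I expect to be the main obstacle: if $p \ne v$ and $d_{out}(p) \ge d_{out}(v)$, then $v$ is reachable from $p$ in at most two steps. I would prove this using the underlying diameter $1$. If $(p,v) \in E$ we are done in one step; otherwise diameter $1$ forces $(v,p) \in E$, so $p \in N_{out}(v)$ and $v \notin N_{out}(p)$. Now look at $N_{out}(p)$: if some $t \in N_{out}(p)$ has $(t,v) \in E$, the path $p \to t \to v$ has length two. If no such $t$ exists, then diameter $1$ forces $(v,t) \in E$ for every $t \in N_{out}(p)$, i.e.\ $N_{out}(p) \subseteq N_{out}(v)$; combined with $p \in N_{out}(v) \setminus N_{out}(p)$ this yields $d_{out}(v) \ge d_{out}(p)+1$, contradicting $d_{out}(p) \ge d_{out}(v)$. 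This ``out-degree domination'' lemma is where all the delicacy lives, since it is the only place the diameter-$1$ hypothesis is genuinely exploited, and the disjointness of $\{p\}$ and $N_{out}(p)$ must be handled carefully in the presence of anti-parallel edges.

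For the base case $i = 0$, recall $f_0[x] = \max\{d_{out}(v) : v \in \{x\} \cup N_{out}(x)\}$, so there is a vertex $p \in \{x\}\cup N_{out}(x)$ with $d_{out}(p) = f_0[x]$, reachable from $x$ in at most one step. For any $v$ with $d_{out}(v) \le f_0[x] = d_{out}(p)$, applying the domination lemma from $p$ reaches $v$ in at most two further steps (the cases $v = x$ and $v = p$ being trivial), for a total of at most $3 = 3(0+1)$ steps.

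For the inductive step, suppose the statement holds for $i$ and that $f_{i+1}[x] \ne \perp$. By definition $f_{i+1}[x] = M(f_i[x]) = d_{out}(z)$ for some $z \in A(f_i[x])$ realizing the maximum, and by the definition of $A(f_i[x])$ this $z$ has an in-neighbor $w$ with $d_{out}(w) \le f_i[x]$. By the induction hypothesis $w$ is reachable from $x$ in at most $3(i+1)$ steps; traversing the edge $(w,z)$ reaches $z$ in at most $3(i+1)+1$ steps. Finally, for any $v$ with $d_{out}(v) \le f_{i+1}[x] = d_{out}(z)$, the domination lemma reaches $v$ from $z$ in at most two more steps, giving a total of at most $3(i+1)+1+2 = 3(i+2)$ steps, as required. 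Taking $v = y$ completes the proof.
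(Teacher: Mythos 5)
Your proposal is correct and takes essentially the same approach as the paper: your ``out-degree domination'' lemma is exactly the paper's Claim \ref{claim:dis2} (proved the same way from the diameter-$1$ hypothesis), and your induction on $i$ mirrors the paper's, which tracks a vertex of out-degree exactly $f_k[x]$ at distance at most $1+3k$ from $x$ and then applies Claim \ref{claim:dis2} once at the end to reach $y$. Strengthening the induction hypothesis to cover all vertices of out-degree at most $f_i[x]$ within distance $3(i+1)$ (thereby absorbing that final application into each level) is only a cosmetic reorganization, not a different argument.
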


We first prove the following auxiliary claim:

\begin{claim} For every two vertices $x$ and $y$ in $V$, if $d_{out}(x) \ge d_{out}(y)$ then $G$ contains a path from $x$ to $y$ of length at most $2$.\label{claim:dis2}\end{claim}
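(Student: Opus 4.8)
The plan is to lean entirely on the defining property of underlying diameter $1$: between any two distinct vertices, at least one of the two possible directed edges is present in $E$. I would first dispose of the trivial cases. If $x = y$ there is nothing to prove, and if $(x,y) \in E$ then the single edge already gives a path of length $1$. So the only interesting case is $x \ne y$ together with $(x,y) \notin E$; here the diameter-$1$ property immediately forces $(y,x) \in E$, so in particular $x \in N_{out}(y)$, a fact I will use at the very end.

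In this remaining case I would hunt for an intermediate vertex $z$ with $(x,z) \in E$ and $(z,y) \in E$, which yields the desired path $x \to z \to y$ of length $2$. I would argue by contradiction: suppose no such $z$ exists, i.e. no out-neighbor of $x$ points to $y$. Then every $z \in N_{out}(x)$ satisfies $(z,y) \notin E$; since $z \ne y$ (as $(x,y) \notin E$) and the underlying diameter is $1$, the absence of $(z,y)$ forces the presence of $(y,z)$, i.e. $z \in N_{out}(y)$. Hence $N_{out}(x) \subseteq N_{out}(y)$.

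The key step is to extract a contradiction from this containment together with the degree hypothesis. From $N_{out}(x) \subseteq N_{out}(y)$ we get $d_{out}(x) \le d_{out}(y)$, which combined with the assumption $d_{out}(x) \ge d_{out}(y)$ yields $d_{out}(x) = d_{out}(y)$ and therefore $N_{out}(x) = N_{out}(y)$. But $x \in N_{out}(y)$ while $x \notin N_{out}(x)$ (no self-loops), so $N_{out}(x) \ne N_{out}(y)$ — a contradiction, which establishes the existence of the witness $z$ and hence the length-$2$ path.

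I do not expect a serious obstacle here; the argument is short. The only points that need care are (i) checking that the witness $z$ is a genuine intermediate vertex, which holds since $z \in N_{out}(x)$ forces $z \ne x$ and $(x,y) \notin E$ forces $z \ne y$, and (ii) invoking the diameter-$1$ property in exactly the two correct spots — once to obtain $(y,x) \in E$ and once to convert the absence of $(z,y)$ into the presence of $(y,z)$. Getting these two applications right is what makes the contradiction land cleanly.
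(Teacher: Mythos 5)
Your proof is correct and follows essentially the same route as the paper: both argue by contradiction that if no path of length at most $2$ exists, then the underlying-diameter-$1$ property forces $y$ to have outgoing edges to $x$ and to all of $N_{out}(x)$, contradicting $d_{out}(x) \ge d_{out}(y)$ by counting. The only cosmetic difference is packaging --- the paper derives the strict inequality directly from $\{x\} \cup N_{out}(x) \subseteq N_{out}(y)$, while you first get $N_{out}(x) \subseteq N_{out}(y)$, upgrade to set equality, and then contradict it with $x \in N_{out}(y) \setminus N_{out}(x)$.
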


\begin{proof} Let $x$ and $y$ be two vertices in $V$ such that $d_{out}(x) \ge d_{out}(y)$ and assume towards a contradiction that the claim does not hold. We must have $x \ne y$ and $N_{in}(y) \cap \left(\{x\} \cup N_{out}(x)\right) = \varnothing$ as otherwise the distance from $x$ to $y$ would be at most $2$. Since the underlying diameter of $G$ is $1$, we get that $\{x\} \cup N_{out}(x) \subseteq N_{out}(y)$, and so that $d_{out}(y) = |N_{out}(y)| \ge |\{x\} \cup N_{out}(x)| > |N_{out}(x)| = d_{out}(x)$ which is a contradiction.\end{proof}

Now, we prove Lemma \ref{lemma:est2}:

\begin{proof} Let $x$ and $y$ be two vertices in $V$. Let $i \in \{0,...,n-1\}$ be such that $f_i[x] \ne \;\perp$ and $f_i[x] \ge d_{out}(y)$. We first show by induction that for every $k \in \{0,...,i\}$ there is a vertex $v_k \in V$ of out-degree $f_k[x]$ whose distance from $x$ is at most $1+3k$.

The base case ($k = 0$) holds as, by the definition of $f_0[x]$, there must be a vertex $v_0 \in \{x\} \cup N_{out}(x)$ such that $d_{out}(v_0) = f_0[x]$, that is, there must be a vertex $v_0 \in V$ of out-degree $f_0[x]$ whose distance from $x$ is at most $1$. Assume now that the claim holds for some $k \in \{0,...,i-1\}$ and prove it for $k+1$. By the induction hypothesis, there must be a vertex $v_k \in V$ whose out-degree is $f_k[x]$ and whose distance from $x$ is at most $1+3k$. We must have $f_{k+1}[x] \ne \; \perp$ (as $k+1\le i$) and so, by definition, there must be some edge $(u,v_{k+1}) \in E$ such that $d_{out}(u) \le f_k[x]$ and $d_{out}(v_{k+1}) = f_{k+1}[x]$. It follows that $d_{out}(u) \le d_{out}(v_k)$ and so from Claim \ref{claim:dis2} the distance from $v_k$ to $u$ is at most $2$. We get that the distance from $v_k$ to $v_{k+1}$ is at most $3$, and so the distance from $x$ to $v_{k+1}$ is at most $(1+3k)+3 = 1+3(k+1)$.

We conclude from the above that there must be some vertex $v_i \in V$ of out-degree $f_i[x]$ whose distance from $x$ is at most $1+3i$. As $f_i[x] \ge d_{out}(y)$ and $d_{out}(v_i) = f_i[x]$, we get that $d_{out}(v_i) \ge d_{out}(y)$. It follows, by Claim \ref{claim:dis2}, that the distance from $v_i$ to $y$ is at most $2$, and so the distance from $x$ to $y$ is at most $(1+3i)+2 = 3(i+1)$.\end{proof}

Lemmas \ref{lemma:est1} and \ref{lemma:est2} give us a way to estimate the distance from every vertex $x \in V$ to every other vertex $y \in V$ by just knowing $d_{out}(y)$ and the sequence $f_0[x],...,f_n[x]$. To see this, note first that these lemmas imply that $y$ is reachable from $x$ if and only if there exists an index $i \in \{0,...,n-1\}$ such that $f_i[x] \ne \;\perp$ and $f_i[x] \ge d_{out}(y)$. Furthermore, these lemmas imply that if $i$ is the minimal index with this property then $d(x,y,G) \le 3(i+1)\le 3\cdot d(x,y,G)$. Therefore, to compute an estimation $\hat{d}(x,y)$ for the distance $d(x,y,G)$, all we need to do is to find the minimal index (if any) $i \in \{0,...,n-1\}$ such that $f_i[x] \ne \;\perp$ and $f_i[x] \ge d_{out}(y)$, and then set $\hat{d}(x,y)$ to be $\infty$ if no such an index exists or to $3(i+1)$ otherwise.

Our next goal is to show that in two communication rounds each vertex $v \in V$ can learn the values of $d_{out}(x)$ and $f_0[x],...,f_n[x]$ of each vertex $x \in V$. To this end, we describe a slightly different but equivalent way to compute the sequence $f_0[x],...,f_n[x]$.

For every vertex $v \in V$ with positive out-degree, we let $m_v$ be the maximum out-degree of its out-going neighbors, that is, $m_v = \max\,\{d_{out}(u) \mid u \in N_{out}(v)\}$. For each $i \in \{0,...,n-1\}$, we set $A'(i)$ to be $\{m_v \mid v \in V \text{ \,and\, } m_v > i \text{ \,and\, } 0 < d_{out}(v) \le i\}$, and then we set $M'(i)$ to be $\perp$ if $A'(i) = \varnothing$ and $\max A'(i)$ otherwise.

\begin{claim}For every $i \in \{0,...,n-1\}$, we have $M(i) = M'(i)$.\end{claim}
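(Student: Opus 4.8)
The plan is to show the two definitions coincide by unwinding both sides and matching the relevant quantities. Recall that $M(i)$ is defined via the set $A(i) = \{u \in V \mid d_{out}(u) > i \text{ and } \exists w \in V \text{ with } (w,u)\in E \text{ and } d_{out}(w)\le i\}$, taking the maximum of $d_{out}(u)$ over $u \in A(i)$ (or $\perp$ if the set is empty). On the other side, $M'(i)$ takes the maximum over $A'(i) = \{m_v \mid v\in V,\ m_v > i,\ 0 < d_{out}(v)\le i\}$, where $m_v$ is the maximum out-degree among the out-neighbors of $v$. I would prove $M(i)=M'(i)$ by fixing an arbitrary $i \in \{0,\dots,n-1\}$ and arguing that the two sets $A(i)$ and $A'(i)$ give rise to the same maximum value, which in particular forces both to be simultaneously empty or simultaneously nonempty.

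First I would handle the emptiness correspondence and then the equality of maxima, ideally together. The natural strategy is to show that a value $r$ is the out-degree of some vertex in $A(i)$ if and only if $r$ arises as $m_v$ for some $v$ counted in $A'(i)$; matching these value-sets (not necessarily the index-sets) suffices to match maxima. For the forward direction, suppose $u \in A(i)$, so $d_{out}(u) > i$ and there is an in-neighbor $w$ of $u$ with $d_{out}(w) \le i$. Then $u \in N_{out}(w)$, so $m_w \ge d_{out}(u) > i$; also $d_{out}(w) \le i$ and, since $w$ has the out-neighbor $u$, we have $d_{out}(w) > 0$, so $w$ qualifies in the definition of $A'(i)$ and contributes $m_w \ge d_{out}(u)$. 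This shows every out-degree appearing in $A(i)$ is bounded above by some element of $A'(i)$, hence $M(i) \le M'(i)$ (interpreting $\perp$ as the appropriate extreme). For the reverse direction, suppose $m_v \in A'(i)$, i.e. $0 < d_{out}(v)\le i$ and $m_v > i$; let $u$ be an out-neighbor of $v$ realizing $d_{out}(u) = m_v$. Then $(v,u)\in E$, $d_{out}(v)\le i$, and $d_{out}(u) = m_v > i$, so $u \in A(i)$ with $d_{out}(u) = m_v$. Hence every element of $A'(i)$ is realized as an out-degree of some vertex in $A(i)$, giving $M'(i)\le M(i)$.

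Combining the two inclusions yields $M(i)=M'(i)$ in the nonempty case, and the argument also shows $A(i)=\varnothing \iff A'(i)=\varnothing$ (if one is nonempty the construction above produces a witness for the other), so the $\perp$ cases agree as well. I do not expect any genuine obstacle here; the only point requiring a little care is the bookkeeping around the condition $d_{out}(v) > 0$ in the definition of $A'(i)$. This condition is automatically satisfied in the reverse direction because having an out-neighbor $u$ forces $d_{out}(v)\ge 1$, and in the forward direction the vertex $w$ witnessing membership in $A(i)$ has the out-edge $(w,u)$, so again $d_{out}(w)>0$. Thus the positivity constraint never rules out a needed witness, and the equivalence goes through cleanly in both directions.
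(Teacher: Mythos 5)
Your proof is correct and in substance identical to the paper's: both arguments rest on the same witness correspondence, namely that a vertex $u \in A(i)$ yields a qualifying low-out-degree vertex $w$ with $m_w \ge d_{out}(u)$, and conversely any $v$ counted in $A'(i)$ has an out-neighbor realizing $m_v$ that lies in $A(i)$. The only difference is organizational --- the paper splits into the cases $M(i) = \;\perp$ and $M(i) \ne \;\perp$ and anchors the second case at the maximizer of $A(i)$, whereas you package the same two observations as the pair of inequalities $M(i) \le M'(i)$ and $M'(i) \le M(i)$, with the $\perp$ cases falling out of the same correspondences.
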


\begin{proof} Let $i \in \{0,...,n-1\}$. We consider separately the cases $M(i) = \;\perp$ and $M(i) \ne \;\perp$. For the first case, note that, by definition, we must have $A(i) = \varnothing$, that is, there cannot be any vertex of out-degree at most $i$ that has an out-going neighbor of out-degree greater than $i$. In other words, we must have $m_v \le i$ for every $v \in V$ such that $0 < d_{out}(v) \le i$, and so $A'(i) = \varnothing$ which implies that $M'(i) = \;\perp$.

In the second case, we have $M(i) \ne \;\perp$ and so, by definition, there must be some vertex $u \in V$ of out-degree at most $i$ that has an out-going neighbor of out-degree $M(i) > i$. Moreover, $M(i)$ is the largest possible out-degree of any vertex in the out-going neighborhood of any vertex of out-degree at most $i$. This means that $m_u = M(i)$ and $m_v \le M(i)$ for every $v \in V$ such that $0 < d_{out}(v) \le i$, and so $M'(i) = M(i)$.\end{proof}

Now, given a vertex $x \in V$, it is easy to see that $f_0[x]$ is equal to $0$ if $d_{out}(x) = 0$ and to $\max\left(d_{out}(x), m_x\right)$ otherwise. Moreover, for every $i \in \{1,...,n\}$ we have, by the above claim, $f_{i}[x] = M'(f_{i-1}[x])$ if $f_{i-1}[x] \ne \;\perp$ and $f_{i}[x] = \;\perp$ otherwise. This observation gives rise to the following algorithm. Each vertex first broadcasts its out-degree to all of its neighbors in the underlying network. In the next round, each vertex with positive out-degree finds the maximal out-degree of its outgoing neighbors and broadcasts this value. By using the received information, each vertex can compute for every $x \in V$ and $i \in \{0,...,n\}$ the value of $f_i[x]$ and so compute a $3$-approximation to the distance between every pair of vertices. We conclude the following:

\begin{corollary} In the Broadcast CONGEST model, there is a deterministic algorithm that solves the 3-approximate APSP problem (for unweighted directed graphs) in two rounds when the diameter of the underlying network is 1.\end{corollary}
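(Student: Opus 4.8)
The plan is to make explicit the two-round protocol already sketched in the preceding discussion and to verify its correctness by invoking Lemmas \ref{lemma:est1} and \ref{lemma:est2}. The first thing I would use is that a diameter-$1$ underlying network is exactly the complete graph $K_n$, so every vertex is a neighbour of every other vertex. Accordingly, in round $1$ each vertex $v$ broadcasts $d_{out}(v)$; because the network is complete, after this round every vertex knows $d_{out}(u)$ for all $u \in V$. In round $2$, each vertex $v$ with positive out-degree computes $m_v = \max\{d_{out}(u) : u \in N_{out}(v)\}$ — which it can do, since $N_{out}(v)$ is known to $v$ locally and all the relevant out-degrees were received in round $1$ — and broadcasts $m_v$. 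After round $2$, every vertex knows the pair $(d_{out}(u), m_u)$ for every $u \in V$. Each broadcast is a single integer in $\{0,\dots,n-1\}$ and hence fits in $\lceil \log_2 n \rceil$ bits, meeting the stated message-size bound.

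Next I would argue that this global degree information lets each vertex reconstruct the entire table $\{f_i[x]\}$ by purely local computation. Since $A'(i)$ and $M'(i)$ depend only on the collection $\{(d_{out}(v), m_v) : v \in V\}$, each vertex can compute $M'(i)$ for every $i \in \{0,\dots,n-1\}$, and by the claim that $M(i)=M'(i)$ these agree with the quantities in the original definition. Each vertex then sets $f_0[x] = 0$ if $d_{out}(x)=0$ and $f_0[x] = \max(d_{out}(x), m_x)$ otherwise, and unrolls the recurrence $f_i[x] = M'(f_{i-1}[x])$, treating $\perp$ as absorbing. This reproduces exactly the sequence defined earlier, so that no communication beyond the two rounds is needed and, crucially, every vertex can carry this out for every $x$ (not merely for itself).

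To finish, for each ordered pair $(x,y)$ with $x \ne y$ each vertex locates the minimal index $i \in \{0,\dots,n-1\}$ with $f_i[x] \ne \perp$ and $f_i[x] \ge d_{out}(y)$, outputting $\hat d(x,y) = 3(i+1)$, and $\hat d(x,y) = \infty$ when no such index exists; the diagonal values $\hat d(x,x)=0$ are set trivially. By Lemma \ref{lemma:est1}, $y$ is reachable from $x$ precisely when such an index exists, and in that case $d(x,y,G) \ge i+1$; by Lemma \ref{lemma:est2}, $d(x,y,G) \le 3(i+1)$. Combining these gives $d(x,y,G) \le \hat d(x,y) \le 3\,d(x,y,G)$, so the reported value is a genuine $3$-approximation that never underestimates, and unreachable pairs are correctly flagged as $\infty$.

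I expect the only real obstacle to be the information-gathering step: confirming that two rounds genuinely suffice for every vertex to learn the full global data $(d_{out}(u), m_u)_{u \in V}$ and hence to answer for all pairs rather than only those incident to it. This rests entirely on the diameter being $1$, which makes the communication graph complete so that two broadcasts reach everyone; the analytic core — converting the $f$-sequence into a distance estimate — is already supplied by Lemmas \ref{lemma:est1} and \ref{lemma:est2}, leaving only the edge cases ($d_{out}(x)=0$, $x=y$, and unreachable $y$) and the message-size accounting to verify.
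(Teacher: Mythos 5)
Your proposal is correct and follows essentially the same route as the paper: the identical two-round protocol (broadcast $d_{out}(v)$, then broadcast $m_v$), the same local reconstruction of the sequences $f_0[x],\dots,f_n[x]$ via the equivalence $M(i)=M'(i)$, and the same use of Lemmas \ref{lemma:est1} and \ref{lemma:est2} to conclude $d(x,y,G) \le \hat d(x,y) \le 3\,d(x,y,G)$. The only additions are explicit edge cases ($x=y$, $d_{out}(x)=0$, unreachable pairs) that the paper leaves implicit.
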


\section{Lower Bounds for Networks of Diameter 2} \label{sec:lower_bounds}

In this section we prove lower bounds for the single-source reachability problem and the closely related single-source shortest path problem for unweighted directed graphs in the Broadcast CONGEST model that hold even when the underlying network has diameter 2.

\subsection{The Single-Source Reachability Problem} \label{sec:lower_re}

We start this section by describing a family (parameterized by two positive integers $k$ and $q$) of directed graphs with underlying diameter at most $2$ which we denote by $F_{k,q}$. This family will be used later on to prove the required lower bound. 

\paragraph{The Family $\boldsymbol{F_{k,q}}$.} For two positive integers $k,q \in \mathbb{Z}$ and a $k$-bit string $\sigma \in \{0,1\}^k$, we define the directed graph $G(k,q,\sigma)$ to be the graph that consists of: \begin{itemize}\item[--] $k$ vertex-disjoint directed paths $P_1, ..., P_k$ with $q$ vertices each (that is, $P_i = (V_i,E_i)$ where $V_i = \{v_1^i,...,v_{q}^i\}$ and $E_i = \{(v_j^i, v_{j+1}^i) \mid j \in \{1,...,q-1\}\}$ for every $i \in \{1,...,k\}$).
\item[--] A source vertex $s$ that has an outgoing edge to $v_1^i$ (the first vertex of $P_i$) if the $i$-th bit of $\sigma$ is $1$, for every $i \in \{1,...,k\}$.
\item[--] A sink vertex $u$ to which $s$ and every vertex in $P_1, ..., P_k$ has an outgoing edge.\end{itemize}
In other words, the vertex set of the graph $G(k,q,\sigma)$ is $V_1 \cup ...\cup V_k \cup \{s,u\}$ and its edge set is $E_1 \cup ...\cup E_k \cup \{(s,v_{1}^i) \mid i \in \{1,...,k\} \text{ and } \sigma(i) = 1\} \cup \{(x,u) \mid x \in \{s\}\cup V_1 \cup ...\cup V_k\}$ (see Figure \ref{fig:graph} for an illustration). For two positive integers $k$ and $q$, we define the family $F_{k,q}$ to be the set $\{G(k,q,\sigma) \mid \sigma \in \{0,1\}^k\}$.

\begin{figure}[t]
	\centering
	\includegraphics[trim={0.1cm 1.2cm 1.0cm 0.23cm},clip]{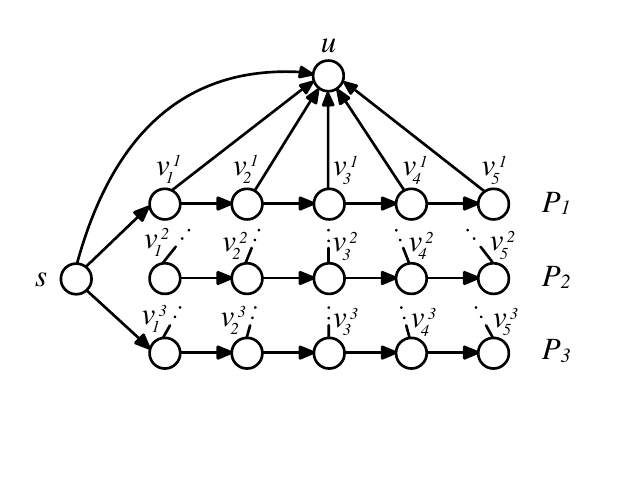}
	\caption{An illustration of the graph $G(k,q,\sigma)$ for $k = 3$, $q = 5$ and $\sigma = 101$. \label{fig:graph}} \end{figure}

Our next goal is to show that any distributed algorithm that solves the single-source reachability problem for all the graphs in $F_{k,q}$ requires a significant number of rounds. We start with the following lemma:

\begin{lemma} Let $k$ and $q$ be two positive integers. Let $G \in F_{k,q}$ and let $\varphi$ be some legal assignment of identifiers to its vertices. Let $A$ be some deterministic distributed algorithm (in the Broadcast CONGEST model) that solves the single-source reachability problem on the instance $(G,\varphi,s)$ using at most $t$ rounds (for some non-negative integer $t < q$). For each $i \in \{1,...,k\}$, the output of the vertex $v_{q}^i$ by the end of the last round is just a function of the initial input of the vertices $v_{q-t}^i, ..., v_{q}^i$ and the sequence of messages that $v_{q}^i$ received from $u$. \label{lem:limited_dep}\end{lemma}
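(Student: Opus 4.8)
The plan is to prove this by induction on the round number, showing that the ``view'' (the entire state and message history) of the vertex $v_q^i$ depends only on the initial inputs of its relevant ancestors along the path $P_i$ together with whatever it hears from the sink $u$. The key structural observation is about the topology of $G(k,q,\sigma)$: the underlying (undirected) neighborhood of $v_q^i$ consists of exactly $v_{q-1}^i$ (its predecessor on the path) and $u$ (to which it has an outgoing edge). Crucially, $v_q^i$ is \emph{not} adjacent to $s$, nor to any vertex of any other path $P_{i'}$ with $i' \neq i$, nor to the ``later'' vertices of its own path. So in each round the only messages $v_q^i$ receives are broadcasts from $v_{q-1}^i$ and from $u$.

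The main step is to track the chain of dependence backwards along $P_i$. I would define, for each vertex on the path, its state after round $r$, and argue inductively that the state of $v_j^i$ after $r$ rounds depends only on the initial inputs of $v_{j-r}^i, \dots, v_j^i$ together with the messages $v_{j-r}^i,\dots,v_j^i$ received from $u$ (and from $s$, if relevant, but the key point is that the path vertices $v_j^i$ for $j \ge 2$ are not adjacent to $s$; only $v_1^i$ can be). Since in the Broadcast CONGEST model a vertex broadcasts the same message to all neighbors, and that message is a function of its current state, the message $v_{j-1}^i$ sends in round $r$ to $v_j^i$ is a function of $v_{j-1}^i$'s state after round $r-1$, which by induction depends on inputs of $v_{j-1-(r-1)}^i, \dots, v_{j-1}^i$ and their $u$-messages. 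Feeding this up to $j=q$ and $r=t$, the state of $v_q^i$ after $t$ rounds depends on the inputs of $v_{q-t}^i, \dots, v_q^i$ and the $u$-messages those vertices received. The condition $t < q$ guarantees the index $q-t \ge 1$ stays within the path, so the backward reach never ``falls off'' the start of $P_i$ and thus never touches $s$ or another path.

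The one subtle point, which I expect to be the main obstacle to state cleanly, is isolating the role of $u$. The sink $u$ is adjacent to \emph{every} path vertex and to $s$, so $u$'s broadcasts could in principle encode global information; the lemma does \emph{not} claim $v_q^i$'s output is independent of $u$, only that it is a function of the local path inputs \emph{and} the sequence of messages received from $u$. So I must be careful not to try to ``unfold'' the dependence on $u$'s messages — I treat the received $u$-messages as given external inputs to the recursion, exactly as the statement permits. The induction therefore only peels back dependence along the directed path and lumps all cross-path influence into the $u$-channel. The final step is just to observe that the output of $v_q^i$ is by definition a function of its own final state, completing the argument.

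To make the recursion precise I would formalize the notion of a vertex's ``view after round $r$'' as the tuple consisting of its initial input and the concatenation of all messages it has received through round $r$, note that its round-$(r+1)$ broadcast is a deterministic function of this view, and then run the double induction on $(j, r)$ with $j+r$ (or the pair ordered appropriately) as the decreasing/increasing parameter. Because each path vertex has a unique non-$u$ neighbor, namely its immediate predecessor, the recurrence is linear and the backward light-cone along the path advances exactly one vertex per round, giving precisely the window $v_{q-t}^i,\dots,v_q^i$ claimed.
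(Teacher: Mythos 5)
Your overall plan is the paper's plan (an induction on rounds that peels dependence back along $P_i$ while treating $u$'s broadcasts as a common external input), but the induction hypothesis you state is false, and the structural claim you use to justify it is wrong. You assert that ``each path vertex has a unique non-$u$ neighbor, namely its immediate predecessor.'' This conflates the directed edges of $G$ with the communication network: in this model communication is bi-directional on the \emph{underlying} graph regardless of edge orientation (the paper assumes this explicitly), so an interior vertex $v_j^i$ with $1 < j < q$ has \emph{two} non-$u$ neighbors, $v_{j-1}^i$ and $v_{j+1}^i$, and receives broadcasts from both. Consequently your one-sided hypothesis --- that the state of $v_j^i$ after $r$ rounds depends only on the inputs of $v_{j-r}^i,\dots,v_j^i$ and their $u$-messages --- already fails at $r=1$: the state of $v_{q-1}^i$ after one round depends on the round-$1$ broadcast of $v_q^i$, hence on $v_q^i$'s initial input, which lies outside your window $\{v_{q-2}^i, v_{q-1}^i\}$. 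Since your recursion unfolds $v_q^i$'s state through the states of predecessors at earlier rounds, and those states do depend on successors, the ``linear'' backward recurrence you describe does not close.

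The repair is exactly what the paper does: make the light-cone two-sided. The paper's induction tracks, at the end of round $j$, every vertex in the suffix $\{v_{q-t+j}^i,\dots,v_q^i\}$ and claims its state is a function of the inputs in its \emph{ball of radius $j$ in the underlying graph of $P_i$} (both directions) plus the $u$-messages it received; the suffix restriction, together with $t < q$, keeps these balls inside the path and out of reach of $s$'s path-borne influence. The lemma then follows because $v_q^i$ is the path's endpoint, so its radius-$t$ ball is one-sided and equals $\{v_{q-t}^i,\dots,v_q^i\}$ --- the one-sidedness is a feature of the endpoint only, not of every path vertex. One further point you leave implicit: to land on the statement's single sequence (the messages \emph{$v_q^i$} received from $u$), you need the observation that $u$, being constrained to broadcast, sends the identical message to all path vertices each round, so the $u$-message sequences of all window vertices coincide and collapse into one; the paper states this explicitly in its inductive step, and your proof should too.
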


\begin{proof} Let $i \in \{1,...,k\}$. We will show by induction on $0 \le j \le t$ that by the end of the $j$-th round the state of each vertex $v \in \{v_{q-t+j}^i,...,v_q^{i}\}$ is just a function of the initial input of the vertices in its ball of radius $j$ (in the underlying graph of $P_i$) and the sequence of messages that it received from $u$ up to this round.

The base case ($j = 0$) clearly holds as the state of each vertex in $\{v_{q-t}^i,...,v_{q}^i\}$ by the end of round $0$ can depend only on its initial input. Assume now that the claim holds for some $0 \le j < t$ and prove it for $j+1$. Let $r \in \{q-t+j+1,...,q\}$. The state of $v_r^i$ by the end of the ($j+1$)-th round is a function of its state at the end of the previous round and the messages that it received from its neighbors in the underlying network (which are $u$, $v_{r-1}^i$ and possibly $v_{r+1}^i$).

The messages that $v_r^i$ has received from its neighbors in $V_i$ are, by the induction hypothesis, functions of the inputs of the vertices in the balls of radius $j$ (in $P_i$) around these neighbors and the sequence of messages that they received from $u$ (up to round $j$). As $u$ broadcasts the same message to all the vertices in each round, we get that these messages are just a function of the inputs of the vertices in the ball of radius $j+1$ around $v_r^i$ (in $P_i$) and the sequence of messages that $v_r^i$ received from $u$ (up to round $j$). As the previous state of $v_r^i$ is, by the induction hypothesis, also a function of the initial inputs of the vertices in its ball of radius $j$ (in $P_i$) and the sequence of message that it received from $u$, the claim follows. \end{proof}

\begin{lemma} Let $k$ and $q$ be two positive integers and let $\varphi$ be some legal assignment of identifiers to $V_1\cup...\cup V_k\cup\{s,u\}$. For every deterministic algorithm $A$ (in the Broadcast CONGEST model), if $A$ solves the single-source reachability problem on all the instances in $\{(G,\varphi,s) \mid G \in F_{k,q}\}$ and uses messages of size at most $B$ bits (for some $B \ge 1$), then $A$ requires at least $\min\{q - 1, k/(2B)\}$ rounds. \label{lem:lower_bound}\end{lemma}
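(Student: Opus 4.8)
The plan is to combine the ``limited dependence'' of Lemma~\ref{lem:limited_dep} with a counting argument on the messages broadcast by $u$. First I would record what the problem demands of the vertices $v_q^1,\dots,v_q^k$. In $G(k,q,\sigma)$ the only edge leading into the path $P_i$ from outside is $(s,v_1^i)$, which is present precisely when $\sigma(i)=1$, and $u$ is a sink; hence $v_q^i$ is reachable from $s$ if and only if $\sigma(i)=1$. Consequently, if $A$ is correct then the reachability bit output by $v_q^i$ equals $\sigma(i)$, so the vector of outputs $(\mathrm{out}(v_q^1),\dots,\mathrm{out}(v_q^k))$ is exactly $\sigma$; as $\sigma$ ranges over $\{0,1\}^k$ this yields $2^k$ distinct output vectors.

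Next, assume for contradiction that $A$ halts in $t<\min\{q-1,\,k/(2B)\}$ rounds. Using $t<q-1$ (so $q-t\ge 2$) I would argue that the initial input of each vertex in $\{v_{q-t}^i,\dots,v_q^i\}$ is independent of $\sigma$: these vertices exclude $v_1^i$, and the only data in the family depending on $\sigma$ are the out-edges of $s$ and the in-edge of each $v_1^i$, none of which touch $v_2^i,\dots,v_q^i$ (their incident edges and their ``not the source'' flags are identical in every $G\in F_{k,q}$, and $\varphi$ is fixed). By Lemma~\ref{lem:limited_dep}, $\mathrm{out}(v_q^i)$ is therefore a function of these fixed inputs and of the sequence $T$ of messages $v_q^i$ received from $u$; since the inputs are fixed, this is a function $g_i(T)$ alone, and $g_i$ is the same for every member of the family (it is determined only by $A$, by $\varphi$, and by those fixed inputs).

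The crucial observation is that $u$ broadcasts the \emph{same} message to all of its neighbours in each round, so all of $v_q^1,\dots,v_q^k$ receive one and the same transcript $T$. Hence the entire output vector $(g_1(T),\dots,g_k(T))$ is determined by the single object $T$. Finally I would count: $T$ is a sequence of at most $t$ messages, each of at most $B$ bits, so there are at most $2^{(B+1)t}\le 2^{2Bt}$ possibilities for $T$ (using $B\ge 1$). Since the output vector is a function of $T$, there can be at most $2^{2Bt}$ distinct output vectors, yet we exhibited $2^k$ of them; thus $2^k\le 2^{2Bt}$, i.e.\ $t\ge k/(2B)$, contradicting the assumption.

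I expect the delicate step to be the passage from ``$\mathrm{out}(v_q^i)$ depends on its local inputs and on $u$'s messages'' to ``$\mathrm{out}(v_q^i)=g_i(T)$ with $g_i$ a single fixed function.'' This requires being precise that, across the family $F_{k,q}$, the only varying ingredient feeding $v_q^i$ is the broadcast transcript $T$; the need for $q-t\ge 2$, and hence the $q-1$ term in the bound, is exactly what keeps $v_1^i$'s $\sigma$-dependent input outside the reach of Lemma~\ref{lem:limited_dep}. Together with the broadcast property of $u$, this is what collapses $k$ a priori independent outputs into a function of a single low-entropy transcript and drives the counting bound.
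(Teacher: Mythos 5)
Your proposal is correct and follows essentially the same route as the paper: both use Lemma~\ref{lem:limited_dep} together with $t\le q-2$ to fix the inputs of $v_{q-t}^i,\dots,v_q^i$ across the family, observe that the output vector is then determined solely by the broadcast transcript of $u$, and count at most $2^{2Bt}$ transcripts against the $2^k$ distinct required output vectors. Your explicit $g_i(T)$ formulation and the contradiction framing are only presentational variants of the paper's direct argument.
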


\begin{proof} Let $A$ be some deterministic algorithm that satisfies the requirements of the lemma and let $t$ be its running time. We can assume that $t \le q-2$ as otherwise there is nothing to show.

For each $G \in F_{k,q}$, we let $\text{out}(G)$ be the sequence $(\text{out}(v_q^1,G),...,\text{out}(v_q^k,G))$ where $\text{out}(v_q^i,G)$ is the output of $v_q^i$ when $A$ is invoked on $(G,\varphi,s)$, for every $i \in \{1,...,k\}$. Lemma \ref{lem:limited_dep} implies that for each $G \in F_{k,q}$ the value of $\text{out}(G)$ is just a function of the initial inputs in $(G,\varphi,s)$ of the vertices $\bigcup_{i=1}^k\{v_{q-t}^i,...,v_{q}^i\}$ and the sequence of messages that $u$ had broadcast. Since we assumed that $t \le q-2$, the initial inputs of these vertices is the same in all $\{(G,\varphi,s) \mid G \in F_{k,q}\}$, and so we can have $\text{out}(G) \ne \text{out}(G')$ for two graphs $G$ and $G'$ in $F_{k,q}$ only if the sequence of messages that $u$ had broadcast in the corresponding invocations was different.

In each round, $u$ may send a message that contains at most $B$ bits, that is, a message with $0$ bits, or with $1$ bit and so on. Therefore, there are $1+2+...+2^B \le 2^{2B}$ different messages that $u$ may send in each round. It follows that there are at most $2^{2Bt}$ possible sequences and so we get that $|\{\text{out}(G) \mid G \in F_{k,q}\}| \le 2^{2Bt}$. Note also that $|\{\text{out}(G) \mid G \in F_{k,q}\}| = 2^{k}$ as for each $G \in F_{k,q}$ the output should be different. We conclude that $2^{k} \le 2^{2Bt}$ and so $t \ge k/2B$.\end{proof}

\begin{corollary} In the Broadcast CONGEST model, there is no deterministic algorithm that solves the single-source reachability problem in $o(\sqrt{n/\log{n}}\,)$ rounds even when the diameter of the underlying network is always $2$.\end{corollary}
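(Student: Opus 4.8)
The plan is to instantiate Lemma~\ref{lem:lower_bound} on the family $F_{k,q}$ with a well-chosen pair of parameters $k$ and $q$, and then to translate the resulting round lower bound into an asymptotic statement in terms of $n$. Before that I would record the two structural facts the lemma needs. First, each graph $G(k,q,\sigma)$ has exactly $n = kq + 2$ vertices, and its underlying diameter is at most $2$: since every vertex has an outgoing edge to the sink $u$, in the underlying (undirected) graph $u$ is adjacent to all other vertices, so any two vertices lie at distance at most $2$ via $u$. Second, the family $\{G(k,q,\sigma) \mid \sigma \in \{0,1\}^k\}$ contains $2^k$ instances whose required reachability outputs are pairwise distinct, because $v_q^i$ is reachable from $s$ exactly when $\sigma(i)=1$ (the only edge entering $P_i$ from outside is $(s,v_1^i)$, present iff $\sigma(i)=1$, and $u$ has no outgoing edges). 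Thus distinct strings already disagree on some endpoint $v_q^i$, which is precisely the situation in which Lemma~\ref{lem:lower_bound} gives its bound.

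Next I would fix any legal identifier assignment $\varphi$ and apply Lemma~\ref{lem:lower_bound}. In the Broadcast CONGEST model on an $n$-vertex network the message size is $B = O(\log n)$ bits, so the lemma shows that every deterministic algorithm solving the single-source reachability problem on all of $\{(G,\varphi,s) \mid G \in F_{k,q}\}$ uses at least $\min\{q-1,\ k/(2B)\}$ rounds.

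To make this lower bound as large as possible relative to $n = kq + 2$, I would balance the two terms of the minimum, setting $q-1 \approx k/(2B)$. Solving this together with $n \approx kq$ and $B = \Theta(\log n)$ leads to the choice $q = \Theta(\sqrt{n/\log n})$ and $k = \Theta(\sqrt{n\log n})$, for which $kq = \Theta(n)$ as required and both $q-1$ and $k/(2B)$ equal $\Theta(\sqrt{n/\log n})$. The lemma then yields a lower bound of $\Omega(\sqrt{n/\log n})$ rounds. Letting $q$ (hence $n$) grow produces such hard instances of underlying diameter at most $2$ for infinitely many values of $n$, so no deterministic algorithm can solve the problem in $o(\sqrt{n/\log n})$ rounds on all diameter-$2$ networks, which is the claimed corollary.

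The only real care needed is in the parameter bookkeeping and the mild interdependence between $B$ and $n$: since $B = \Theta(\log n)$ while $n = \Theta(kq)$ and $k$ itself carries a $\log n$ factor, I would verify that the choices of $k$ and $q$ are mutually consistent (that $\log n = \Theta(\log(kq))$ and that the two arguments of the minimum truly coincide up to constant factors) rather than circular, and that the realized values of $n$ form an infinite set so the $o(\cdot)$ conclusion is legitimate. Beyond this routine verification the statement is an immediate consequence of Lemma~\ref{lem:lower_bound}.
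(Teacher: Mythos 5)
Your proposal is correct and takes essentially the same route as the paper: both instantiate Lemma~\ref{lem:lower_bound} on the family $F_{k,q}$ with the balanced choice $k = \Theta(\sqrt{n\log n})$ and $q = \Theta(\sqrt{n/\log n})$ so that both arguments of the $\min\{q-1,\,k/(2B)\}$ bound become $\Theta(\sqrt{n/\log n})$ under $B = \Theta(\log n)$, yielding the claimed lower bound for infinitely many $n$. The only difference is presentational: the paper runs the argument as an explicit contradiction with named constants (choosing $m > n_0$ with $k=\sqrt{m\log m}$ and $q=\sqrt{m/\log m}$ integral), whereas you state it directly in asymptotic form while flagging exactly the same bookkeeping (consistency of $B$ with $n = kq+2$ and infinitude of the realized values of $n$) that the paper's constants make explicit.
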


\begin{proof} Assume towards a contradiction that there exists a deterministic algorithm $A$ that solves the above problem in $T(n) = o(\sqrt{n/\log{n}}\,)$ rounds. As $A$ works in the CONGEST model, there must be some constant $c \ge 1$ such that the number of bits in any message that the algorithm may send (when it is invoked on inputs of size $n > 1$) is at most $c\cdot\log{n}$.

Since $T(n) = o(\sqrt{n/\log{n}}\,)$, there must be some integer $n_0 \ge 16$ for which $T(n) \le \frac{1}{10c}\sqrt{n/\log{n}}$ holds for every $n > n_0$. Choose an integer $m > n_0$ such that both $k = \sqrt{m\log{m}}$ and $q = \sqrt{m/\log{m}}$ are positive integers. Lemma \ref{lem:lower_bound} implies that there must be some $G \in F_{k,q}$ and some assignment of identifiers $\varphi$ to $V(G)$ such that invoking the algorithm on $(G,\varphi,s)$ requires at least $\min\{\frac{1}{2}q,\frac{1}{4c}\frac{k}{\log{m}}\} = \min\{\frac{1}{2}\sqrt{\frac{m}{\log{m}}},\frac{1}{4c}\sqrt{\frac{m}{\log{m}}}\} = \frac{1}{4c}\sqrt{\frac{m}{\log{m}}}$ rounds. But, we also have $|V(G)| > m > n_0$ and so the algorithm must take at most $\frac{1}{5c}\sqrt{m/\log{m}}$ rounds on $(G,\varphi,s)$, a contradiction.\end{proof}

In the next lemma, we show that the same lower bound holds for distributed randomized algorithms as well.

\begin{lemma} Let $k$ and $q$ be two positive integers and let $\varphi$ be some legal assignment of identifiers to $V_1\cup...\cup V_k\cup \{s,u\}$. For every randomized algorithm $A$ (in the Broadcast CONGEST model), if $A$ correctly solves the SSR problem on each instance in $\{(G,\varphi,s)\mid G \in F_{k,q}\}$ with probability $> 1/2$ and uses messages of size at most $B$ bits (for some $B\ge1$), then $A$ requires at least $\min\{q - 1,\, (k-1)/(2B)\}$ rounds.\label{lem:lower_bound_rand}\end{lemma}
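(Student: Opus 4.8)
The plan is to reduce the randomized lower bound to the counting argument already carried out for the deterministic case (Lemma \ref{lem:lower_bound}) by applying an averaging argument over the algorithm's random coins, in the spirit of Yao's minimax principle. As in the deterministic proof, I would first dispose of the easy case: if the running time $t$ satisfies $t \ge q-1$ there is nothing to prove, so I assume $t \le q-2$ throughout.

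Fix the uniform distribution over the input family, i.e.\ pick $\sigma$ uniformly from $\{0,1\}^k$ so that $G$ is uniform in $F_{k,q}$. Since $A$ succeeds on every instance $(G,\varphi,s)$ with probability strictly greater than $1/2$ over its coins, averaging over $\sigma$ and swapping the order of expectations gives $\mathbb{E}_{\text{coins}}\big[\text{fraction of inputs answered correctly}\big] = \mathbb{E}_{\sigma}\big[\Pr_{\text{coins}}[\text{correct}]\big] > 1/2$. Hence there is a fixing of all the random coins of all the vertices that turns $A$ into a deterministic algorithm $A'$ (using messages of at most $B$ bits and running in $t$ rounds) which correctly solves the SSR problem on a set $S \subseteq F_{k,q}$ of strictly more than $2^{k-1}$ instances.

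Now I would run the counting argument of Lemma \ref{lem:lower_bound} on $A'$. By Lemma \ref{lem:limited_dep} (which applies to the deterministic $A'$), the output vector $\mathrm{out}(G) = (\mathrm{out}(v_q^1,G),\dots,\mathrm{out}(v_q^k,G))$ is, for $t \le q-2$, a function only of the initial inputs of $\bigcup_{i=1}^k\{v_{q-t}^i,\dots,v_q^i\}$ (which are identical across all of $F_{k,q}$) and of the single sequence of messages broadcast by $u$. Since $u$ may broadcast at most $2^{2B}$ distinct messages per round, there are at most $2^{2Bt}$ possible such sequences, so $|\{\mathrm{out}(G)\mid G\in F_{k,q}\}| \le 2^{2Bt}$. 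On the other hand, on the correctly answered instances in $S$ the output is exactly the corresponding string $\sigma$, so the map $G\mapsto \mathrm{out}(G)$ is injective on $S$; hence $|S|\le 2^{2Bt}$. Combining with $|S|>2^{k-1}$ yields $2^{k-1}<2^{2Bt}$, i.e.\ $t>(k-1)/(2B)$, which together with the case $t\ge q-1$ handled at the outset gives the claimed bound $t\ge\min\{q-1,(k-1)/(2B)\}$.

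The main obstacle is the averaging step and making sure it is watertight across the two natural randomness models. The point is that fixing \emph{all} the coins (the private coins of every vertex, plus any shared randomness) simultaneously yields a genuinely deterministic algorithm to which Lemmas \ref{lem:limited_dep} and \ref{lem:lower_bound} apply verbatim; the only thing the averaging buys us over the deterministic case is that $A'$ need not be correct on all of $F_{k,q}$ but merely on a majority, which is precisely why the bound degrades from $k/(2B)$ to $(k-1)/(2B)$. A secondary point worth checking is that the information about $\sigma$ genuinely has to funnel through $u$: the endpoint $v_q^i$ cannot learn $\sigma(i)$ from its own $t$-neighborhood along $P_i$ (that neighborhood is input-identical across the family when $t\le q-2$), so the broadcast bottleneck at $u$ is the only channel, which is exactly what Lemma \ref{lem:limited_dep} formalizes.
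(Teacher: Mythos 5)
Your proof is correct and takes essentially the same route as the paper's: both reduce to fixed randomness and reuse the counting bound from Lemma \ref{lem:lower_bound} (via Lemma \ref{lem:limited_dep}, which needs only determinism, not correctness) that any coin-fixed run answers at most $2^{2Bt}$ of the $2^k$ instances correctly. The paper phrases the averaging dually---summing failure probabilities to get $0.5\,|F_{k,q}| > |F_{k,q}| - 2^{2Bt}$---whereas you select a single best fixing of the coins with more than $2^{k-1}$ correct instances, but these are the same argument and yield the identical bound $t > (k-1)/(2B)$.
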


\begin{proof} Clearly, it is sufficient to show that this lower bound holds in a model that generates a public random string first, announces it to every vertex in the graph and then every vertex proceeds deterministically as usual. Let $A$ be a randomized algorithm that works in the above model and solves the SSR problem on every instance in $F = \{(G,\varphi, s) \mid G \in F_{k,q}\}$ with probability $> 1/2$. We can assume that its running time $t$ is at most $q-2$. As in the proof of Lemma \ref{lem:lower_bound}, we can show that, for every fixed random string $r$, the algorithm (given that string) can succeed on at most $2^{2Bt}$ of the instances in $F$. 

For each graph $G$ in $F_{k,q}$, let $R_G$ be the event that the algorithm fails on $(G,\varphi,s)$. Note that we must have $\sum_{G\in F_{k,q}}P(R_G) \ge |F_{k,q}| - 2^{2Bt}$. By assumption, we must also have $0.5|F_{k,q}| > \sum_{G\in F_{k,q}}P(R_G)$ and so $0.5|F_{k,q}| > |F_{k,q}| - 2^{2Bt}$ which implies that $t > (k-1)/2B$.\end{proof}

\subsection{The Single-Source Shortest Path Problem} \label{sec:lower_sssp}

The result of the previous section already gives a lower bound of $\Omega(\sqrt{n/\log{n}}\,)$ for the directed SSSP problem (or even for the approximate version of it) as, by definition, any algorithm that solves this problem must also solve the SSR problem.

In this section we show a slightly stronger lower bound of $\Omega(\sqrt{n}\,)$ for this problem which holds even when the diameter of the underlying network is 2 and even when all the vertices in the input graph are guaranteed to be reachable from the given source. As in the previous section, we start by describing a family $J_k$ of unweighted directed graphs with underlying diameter 2 which will be used to prove the lower bound.

\paragraph{The Family $\boldsymbol{J_{k}}$.} For a positive integer $k$ and a sequence $\sigma$ of $k$ numbers from $\{1,...,k\}$, we define the directed graph $G(k,\sigma)$ to be the graph that consists of: \begin{itemize}\item[--] $k$ vertex-disjoint directed paths $P_1, ..., P_k$ where each $P_i = (V_i,E_i)$ contains $\sigma(i) + k$ vertices. For each path $P_i$, we denote by $u^i$ its first vertex and by $v_1^i,...,v_{k}^i$ its last $k$ vertices.
\item[--] A source vertex $s$ that has an outgoing edge to the first vertex of every path in $\{P_1, ..., P_k\}$.
\item[--] A sink vertex $u$ to which $s$ and every vertex in $P_1, ..., P_k$ has an outgoing edge.\end{itemize}
In other words, the vertex set of the graph $G(k,\sigma)$ is $V_1 \cup ...\cup V_k \cup \{s,u\}$ and its edge set is $E_1 \cup ...\cup E_k \cup \{(s,u^i) \mid i \in \{1,...,k\}\} \cup \{(x,u) \mid x \in \{s\}\cup V_1 \cup ...\cup V_k\}$ (see Figure \ref{fig:graph1} for an illustration). For a positive integer $k$, we define the family $J_{k}$ to be the set $\{G(k,\sigma) \mid \sigma \in \{1,...,k\}^k\}$. 

\begin{figure}[t]
	\centering
	\includegraphics[trim={0.2cm 0.8cm 0.6cm 0.4cm},clip]{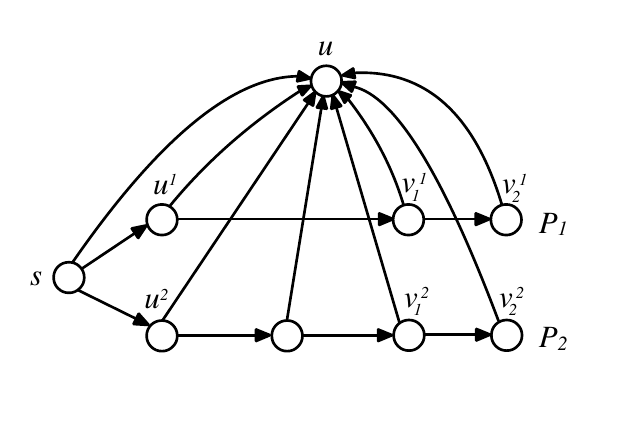}
	\caption{An illustration of the graph $G(k,\sigma)$ for $k = 2$ and $\sigma = (1,2)$. \label{fig:graph1}} 
\end{figure}

We say that a collection of assignments $\{\varphi_G \mid G \in J_{k}\}$ is a consistent set of assignments for the family $J_{k}$, if $\varphi_{G}$ is a legal assignment of identifiers to $V(G)$ and $\varphi_{G}(x) = \varphi_{G'}(x)$, for every $G,G' \in J_k$ and $x \in \{u\} \cup (\bigcup_{i=1}^k\{v_{1}^i,...,v_{k}^i\})$.

\begin{lemma} Let $k > 1$ be some integer and let $\{\varphi_G \mid G \in J_{k}\}$ be some consistent set of assignments for $J_k$. For every deterministic algorithm $A$ (in the Broadcast CONGEST model), if $A$ solves the SSSP problem on all the instances in $\{(G,\varphi_G,s) \mid G \in J_k\}$, then $A$ requires $\Omega(k)$ rounds.\label{lem:lower_bound_sssp}\end{lemma}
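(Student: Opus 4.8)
The plan is to follow the two-step template of the single-source reachability lower bound (Lemmas \ref{lem:limited_dep} and \ref{lem:lower_bound}), while exploiting the fact that an SSSP algorithm must output a full \emph{distance} at each vertex rather than a single reachability bit. First I would record the governing structural fact: since $u$ is a sink, the only way into $P_i$ is along $P_i$, and $s$ reaches the first vertex $u^i$ in one step, so that $d(s, v_j^i, G) = \sigma(i) + j$ for every $i$ and every $j \in \{1,\dots,k\}$. In particular the distance that the last vertex $v_k^i$ must output is $\sigma(i) + k$, so the tuple $(\mathrm{out}(v_k^1),\dots,\mathrm{out}(v_k^k))$ determines $\sigma$ uniquely, and there are $k^k$ distinct such tuples that the algorithm is forced to produce.

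Next I would establish the analogue of Lemma \ref{lem:limited_dep}: if $A$ halts within $t \le k-2$ rounds, then for each $i$ the output of $v_k^i$ is a function only of the initial inputs of $v_{k-t}^i,\dots,v_k^i$ and of the sequence of messages broadcast by $u$. The induction is verbatim that of Lemma \ref{lem:limited_dep}, since in the underlying graph a last-$k$ vertex $v_r^i$ is adjacent only to $u$, to $v_{r-1}^i$, and (when it exists) to $v_{r+1}^i$; because $u$ broadcasts the same message to all vertices, the state of $v_r^i$ after $m+1$ rounds depends solely on $u$'s broadcasts and on the inputs in its radius-$(m+1)$ ball inside $P_i$. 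The role of the hypothesis $t \le k-2$ is that the radius-$t$ ball of $v_k^i$ reaches back only as far as $v_{k-t}^i$ (index $\ge 2$) and never touches $v_1^i$, which is the unique last-$k$ vertex whose initial input can vary with $\sigma$ (its in-neighbour is the last, $\sigma(i)$-dependent, prefix vertex of $P_i$). By consistency of $\{\varphi_G\}$, each of $v_{k-t}^i,\dots,v_k^i$ has the same identifier and the same incident-edge pattern to fixed-identifier neighbours in every graph of $J_k$, so their initial inputs coincide across the whole family; consequently the only $\sigma$-dependence surviving in $\mathrm{out}(v_k^i)$ passes through the broadcasts of $u$.

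I would then finish with the counting argument of Lemma \ref{lem:lower_bound}. For $t \le k-2$ the tuple $(\mathrm{out}(v_k^1,G),\dots,\mathrm{out}(v_k^k,G))$ is a fixed function of the broadcast sequence of $u$, and with message size at most $B$ bits there are at most $2^{2Bt}$ such sequences. Correctness forces this tuple to take a different value for each $G \in J_k$, whence $k^k \le 2^{2Bt}$ and $t \ge k\log k /(2B)$. Because $|V(G)| = \sum_i (\sigma(i)+k) + 2 = \Theta(k^2)$, the Broadcast CONGEST message bound is $B = O(\log |V(G)|) = O(\log k)$, and therefore $t = \Omega(k)$ (the case $t \ge k-1$ being immediate).

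The hard part will be the limited-dependence induction together with the bookkeeping that $v_k^i$'s $t$-ball stays inside the $\sigma$-independent suffix of $P_i$; this is exactly what lets each of the $k$ paths carry $\log k$ bits of output information rather than a single bit. That extra $\log k$ factor is precisely what cancels the $O(\log n) = O(\log k)$ loss coming from the message size, and is what upgrades the bound from the $\Omega(\sqrt{n/\log n}\,)$ of Section \ref{sec:lower_re} to $\Omega(\sqrt{n}\,)$.
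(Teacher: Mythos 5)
Your proposal is correct and takes essentially the same approach as the paper's proof: the same adaptation of Lemma \ref{lem:limited_dep} to the last $k$ vertices of each path (with $t \le k-2$ ensuring the dependence ball stays within the $\sigma$-independent suffix $v_2^i,\dots,v_k^i$), followed by the same counting of $u$'s possible broadcast sequences against the $k^k$ distinct output tuples, with $B = O(\log k)$ since $|V(G)| = \Theta(k^2)$. Your explicit calculation $d(s,v_k^i,G) = \sigma(i)+k$ merely spells out what the paper asserts ``by construction'' when arguing that distinct $\sigma$'s force distinct outputs.
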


\begin{proof} Let $A$ be some deterministic algorithm that satisfies the requirements of the lemma and let $t$ be its running time. We can assume that $t \le k-2$ as otherwise there is nothing to show. As $A$ works in the CONGEST model, there must be some constant $c \ge 1$ such that the number of bits in any message that the algorithm may send on any input of size $n > 1$ is at most $c\cdot \log{n}$.

Consider invoking the algorithm $A$ on the instance $(G,\varphi_G,s)$ for some $G \in J_k$. In each round, $u$ may send one message containing at most $c\cdot \log(|V(G)|) \le 4c \cdot \log(k)$ bits to all the vertices in the graph. Given that, it is easy to see (by a proof similar to Lemma \ref{lem:limited_dep}) that the output of every $v_k^i$ is just a function of its initial input, the initial input of at most $t \le k-2$ vertices that precede it in the path $P_i$, and the sequence of messages that $u$ had broadcast.

For each $G \in J_k$, we let $\text{out}(G)$ be the sequence $(\text{out}(v_k^1,G),...,\text{out}(v_k^k,G))$ where $\text{out}(v_k^i,G)$ is the output of $v_k^i$ when $A$ is invoked on $(G,\varphi_G,s)$, for every $i \in \{1,...,k\}$. By the observation above, for each $i \in \{1,...,k\}$ the vertices in $P_i$ whose initial input may affect the output of $v_k^i$ are just $v_{2}^i,...,v_{k}^i$. Since the initial input of each of these vertices is the same in each of the instances in $\{(G,\varphi_G,s) \mid G \in J_k\}$, we get that $\text{out}(G) \ne \text{out}(G')$ can hold for some graphs $G$ and $G'$ in $J_k$ only if the sequence of messages that $u$ had broadcast in the corresponding invocations was different.

Straightforward calculations show that the number of such sequences is at most $(2^{8c \cdot \log(k)})^t$ $= k^{8c\cdot t}$, and so $|\{\text{out}(G) \mid G \in J_k\}| \le k^{8c\cdot t}$. Since we assumed that the algorithm is correct, we must have $\text{out}(G) \ne \text{out}(G')$ for every two different graphs $G$ and $G'$ in $J_k$ (as, by construction, we cannot have $d(s,v_k^i,G) = d(s,v_k^i,G')$ for every $i \in \{1,...,k\}$), and so $|\{\text{out}(G) \mid G \in J_k\}| = |J_k| = k^k$. We conclude that $k^k \le k^{8c\cdot t}$ and so $t \ge k / (8c)$.\end{proof}

\begin{corollary} In the Broadcast CONGEST model, there is no deterministic algorithm that solves the SSSP problem in $o(\sqrt{n}\,)$ rounds even when the diameter of the underlying network is always $2$. \end{corollary}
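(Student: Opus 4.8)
The plan is to derive the corollary from Lemma~\ref{lem:lower_bound_sssp} by the same contradiction scheme used in the single-source reachability corollary, the only substantive work being the translation of the $\Omega(k)$ lower bound into an $\Omega(\sqrt{n}\,)$ one. First I would assume towards a contradiction that some deterministic algorithm $A$ solves the SSSP problem in $T(n) = o(\sqrt{n}\,)$ rounds on every input whose underlying network has diameter $2$, and fix a constant $c \ge 1$ such that every message $A$ sends on an input of size $n > 1$ has at most $c\log n$ bits (such a $c$ exists because $A$ works in the CONGEST model).

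Next I would count the vertices of a graph $G(k,\sigma) \in J_k$: each path $P_i$ contains $\sigma(i) + k$ vertices with $\sigma(i) \in \{1,\dots,k\}$, so together with $s$ and $u$ one gets $|V(G)| = 2 + k^2 + \sum_{i=1}^{k}\sigma(i)$, a quantity lying between $k^2 + k + 2$ and $2k^2 + 2$. In particular $|V(G)| = \Theta(k^2)$, and every graph in $J_k$ has underlying diameter $2$ (the sink $u$ is adjacent to all other vertices). I would also observe that a consistent set of assignments for $J_k$ exists: the sink $u$ and the last $k$ vertices $v_1^i,\dots,v_k^i$ of each path appear in every $G(k,\sigma)$, so they may be given fixed identifiers throughout, while the remaining vertices of each graph receive arbitrary distinct identifiers; this is precisely the hypothesis Lemma~\ref{lem:lower_bound_sssp} requires.

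With this in hand I would apply Lemma~\ref{lem:lower_bound_sssp}: since every instance in $\{(G,\varphi_G,s)\mid G\in J_k\}$ has underlying diameter $2$, the algorithm $A$ solves all of them, and the lemma furnishes, for each $k$, an instance $(G_k,\varphi_{G_k},s)$ with $G_k \in J_k$ on which $A$ uses at least $k/(8c)$ rounds. Writing $n_k = |V(G_k)| \le 2k^2 + 2$ gives $k \ge \sqrt{(n_k-2)/2}$, so on this $n_k$-vertex, diameter-$2$ instance $A$ runs for at least $\tfrac{1}{8c}\sqrt{(n_k-2)/2}$ rounds. Since $n_k \ge k^2 + k + 2 \to \infty$ as $k \to \infty$, and since $T(n) = o(\sqrt{n}\,)$ while $\tfrac{1}{8c}\sqrt{(n-2)/2} = \Theta(\sqrt{n}\,)$, for all sufficiently large $k$ we have $T(n_k) < \tfrac{1}{8c}\sqrt{(n_k-2)/2}$, contradicting the assumed running-time bound $T$ of $A$ on this instance.

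The heart of the matter was already settled inside Lemma~\ref{lem:lower_bound_sssp} through the information-theoretic count of the message sequences that $u$ can broadcast; what remains is essentially bookkeeping. The only points that need care are the tight two-sided estimate $n = \Theta(k^2)$ — it is the upper bound $n \le 2k^2 + 2$ that converts $\Omega(k)$ into $\Omega(\sqrt{n}\,)$ rather than something weaker — and routing the $o(\sqrt{n}\,)$ definition through the choice of $k$, exactly as in the proof of the analogous reachability corollary.
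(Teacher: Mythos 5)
Your proposal is correct and follows essentially the same route as the paper's own proof: assume $T(n) = o(\sqrt{n}\,)$, fix the message-size constant $c$, invoke the explicit $k/(8c)$ bound from the proof of Lemma~\ref{lem:lower_bound_sssp}, and use $|V(G)| = \Theta(k^2)$ (in particular the upper bound $\le 2k^2+2$) to turn $\Omega(k)$ into $\Omega(\sqrt{n}\,)$ and reach a contradiction. The only differences are cosmetic: the paper fixes a concrete threshold $\frac{1}{18c}\sqrt{n}$ and a single $k$ with $k^2 > n_0$, whereas you argue asymptotically over all large $k$, and you additionally spell out the existence of a consistent identifier assignment and the diameter-$2$ property, which the paper leaves implicit.
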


\begin{proof} Assume towards a contradiction that there exists a deterministic algorithm $A$ that solves the above problem in $T(n) = o(\sqrt{n}\,)$ rounds. As before, we can assume that there is a constant $c \ge 1$ such that the number of bits in any message that $A$ may send on any input of size $n > 1$ is at most $c\cdot\log{n}$.

Since $T(n) = o(\sqrt{n}\,)$, there must be some positive integer $n_0$ for which $T(n) \le \frac{1}{18c}\sqrt{n}$ holds for every $n > n_0$. Let $k > 1$ be an integer such that $k^2 > n_0$. The proof of Lemma \ref{lem:lower_bound_sssp} implies that there must be some $G \in J_{k}$ and some assignment of identifiers $\varphi$ to $V(G)$ such that invoking the algorithm on $(G,\varphi,s)$ requires at least $k/(8c)$ rounds. But, $|V(G)| > k^2 > n_0$ and so the algorithm must take at most $\frac{1}{18c}\sqrt{|V(G)|} \le \frac{1}{9c}k$ rounds on $(G,\varphi,s)$, a contradiction.\end{proof}

By a proof similar to that of the previous section, it is possible to show that the same lower bound holds for randomized distributed algorithms as well.

\subsection{A Note About the Topology of the Networks} \label{sec:topology}

The graphs that were constructed in the previous two sections are very simple in terms of their structure. In fact, by a slight modification, it is possible to simplify their structure even further, and thus, to extend the lower bounds to a wider class of graph families. In this section we illustrate this modification on the family $F_{k,q}$ (a similar modification also applies to $J_k$).

For two positive integers $k,q \in \mathbb{Z}$ and a $k$-bit string $\sigma \in \{0,1\}^k$, we define the directed graph $G'(k,q,\sigma)$ to be the graph that is obtained from $G(k,q,\sigma)$ by removing the vertex $s$ (together with its adjacent edges) and reversing the direction of the edge $(v_i^1,u)$ for every $i \in \{1,...,k\}$ such that $\sigma(i) = 1$ (see Figure \ref{fig:graph2} for an illustration).

\begin{figure}[t]
	\centering
	\includegraphics[trim={0cm 0.1cm 0cm 0.2cm},clip]{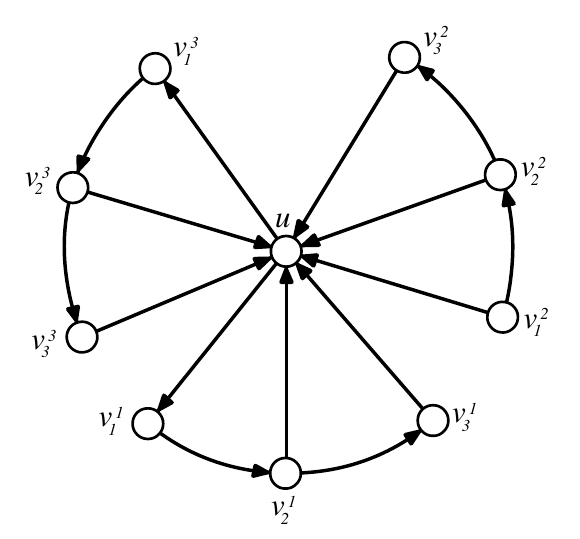}
	\caption{An illustration of the graph $G'(k,q,\sigma)$ for $k = 3$, $q = 3$ and $\sigma = 101$. \label{fig:graph2}} \end{figure}

The above modification simplifies the structure of the graphs in $F_{k,q}$ without affecting the lower bound. More precisely, given some legal assignment of identifiers $\varphi$ to $V_1 \cup ... \cup V_k \cup \{u\}$, it is straightforward to adapt the proof of Lemma \ref{lem:lower_bound} (and similarly of Lemma \ref{lem:lower_bound_rand}) to show that any algorithm (in the Broadcast CONGEST model) that uses messages of size at most $B$ and solves the SSR problem on all the instances in $\{(G,\varphi,u) \mid G \in F'_{k,q}\}$, where $F'_{k,q} = \{G'(k,q,\sigma) \mid \sigma \in \{0,1\}^k\}$, requires $\Omega(\min\{q,k/B\})$ rounds.

\bibliographystyle{plain}
\bibliography{ref}

\end{document}